\newcommand{\belief}{\pi} % We can make it \hat{\pi} as well but it may be better not to complicate the notation.
\newcommand{\argmax}{\mathop{\mathrm{argmax}}}
\newcommand{\nn}{\nonumber}
\newcommand{\Q}{Q}
\newcommand{\stepA}{\bar{\alpha}}
\newcommand{\vfunc}{\upsilon}
\newcommand{\ufunc}{u}
\newcommand{\deltaQ}{\delta\Q}
\newcommand{\XiV}{\Upsilon}
\newcommand{\XiU}{Y}
\newcommand{\uu}{\underline{\ufunc}}
\newcommand{\ue}{\underline{e}}
\newcommand{\tbeta}{\tilde{\beta}}
\newcounter{theorem}
\newenvironment{definition}{\refstepcounter{theorem}\par\medskip
   \noindent 
   \textbf{Definition \thetheorem.} \em \rmfamily}
\newenvironment{proposition}{\refstepcounter{theorem}\par\medskip
   \noindent  
   \textbf{Proposition \thetheorem.} \em \rmfamily}
\newenvironment{lemma}{\refstepcounter{theorem}\par\medskip
   \noindent  
   \textbf{Lemma \thetheorem.} \em \rmfamily}
\newenvironment{corollary}{\refstepcounter{theorem}\par\medskip
   \noindent  
   \textbf{Corollary \thetheorem.} \em \rmfamily}
\newenvironment{theorem}{\refstepcounter{theorem}\par\medskip
   \noindent  
   \textbf{Theorem \thetheorem.} \em \rmfamily}
\newenvironment{remark}[1]{\refstepcounter{theorem}\par\medskip
   \noindent  
   \textbf{Remark \thetheorem~(#1).} \em \rmfamily}
{\medskip}
\newenvironment{remarkPlain}{\refstepcounter{theorem}\par\medskip
   \noindent  
   \textbf{Remark \thetheorem.} \em \rmfamily}
{\medskip}
\newenvironment{proof}{%\par\medskip
   \noindent \textit{Proof:} \rmfamily}{\hfill $\square$\medskip}
\begin{document} 
\date{}

\title{\LARGE \bf Fictitious Play in Markov Games with Single Controller}

% Anonymized submission.
\author{Muhammed O. Sayin\thanks{M. O. Sayin is with the Department of Electrical and Electronics Engineering, Bilkent University, Ankara Turkey {\tt \small sayin@ee.bilkent.edu.tr}}
\and Kaiqing Zhang\footnotemark[2]
\and Asuman Ozdaglar \thanks{K. Zhang and A. Ozdaglar are with the Laboratory for Information and Decision Systems, Massachusetts Institute of Technology, Cambridge MA, United States of America {\tt \small kaiqing@mit.edu, asuman@mit.edu}}}
\maketitle
\thispagestyle{empty} 

\bigskip

% Abstract. Note that this must come before \maketitle.
\begin{center}
\textbf{Abstract}
\end{center}
Certain but important classes of strategic-form games, including zero-sum and identical-interest games, have the \textit{fictitious-play-property} (FPP), i.e., beliefs formed in fictitious play dynamics always converge to a Nash equilibrium (NE) in the repeated play of these games. Such convergence results are seen as a (behavioral) justification for the game-theoretical equilibrium analysis. Markov games (MGs), also known as stochastic games, generalize the repeated play of strategic-form games to dynamic multi-state settings with Markovian state transitions. In particular, MGs are standard models for multi-agent reinforcement learning -- a reviving research area in learning and games, and their game-theoretical equilibrium analyses have also been conducted extensively. However, whether certain classes of MGs have the FPP or not (i.e., whether there is a behavioral justification for equilibrium analysis or not) remains largely elusive. In this paper, we study a new variant of   fictitious play dynamics for MGs and show its convergence to an NE in $n$-player identical-interest MGs in which a single player controls the state transitions. Such games are of interest in communications, control, and economics applications. Our result together with the recent results in \citep{ref:Sayin20} establishes the FPP of two-player zero-sum MGs and $n$-player identical-interest MGs with a single controller (standing at two different ends of the MG spectrum from fully competitive to fully cooperative).
\bigskip

%%
%% Keywords. The author(s) should pick words that accurately describe
%% the work being presented. Separate the keywords with commas.
\textbf{Keywords: } Fictitious play, Markov games, identical-interest games, zero-sum games

\bigskip

\bigskip

%%%%%%%
%{\ } %%%%%%%
\newpage 
\setcounter{page}{1} %%%%\linespread{.9}
%%\tableofcontents
%
%%\date{\today}
%%%%%
%%%%%
\begin{spacing}{1.245}

%%%%%%%%%%%%%%%%%%%%%%%%%%%%%%%%%
\section{Introduction}  

Markov games (MGs), also known as stochastic games, since their introduction in \citep{ref:Shapley53}, have been broadly used to model strategic interactions of multiple agents in dynamic environments with multiple states. The players' actions affect not only their immediate \textit{stage-payoffs}, but also the state transitions, and therefore, their future stage-payoffs.\footnote{Hereafter, we use {\it players}  and {\it agents} interchangeably.}  This powerful  framework to  model the sequential decision-making of multiple agents finds broad applications in both Engineering and Economics  \citep{neyman2003stochastic,bacsar1998dynamic}. Moreover, MGs also serve as  the fundamental  framework for multi-agent reinforcement learning \citep{ref:Littman94,busoniu2008comprehensive,zhang2019multi}. 

Nash equilibrium (NE) \citep{nash1951non}, on the other hand, has  been broadly used as a solution concept in game theory. One important justification of NE is that it is the natural outcome of  the myopic learning dynamics of players that  take greedy best response actions. Such a perspective has been extensively studied in strategic-form games (also known as normal-form or one-shot games), for best-response and fictitious-play types of learning  dynamics \citep{ref:Hofbauer02,ref:Leslie05,marden2009payoff,swenson2018best}.  In particular, these non-equilibrium adaptation dynamics are referred to as being {\it uncoupled/independent}, and these games where fictitious-play dynamics converge are referred to as having {\it fictitious-play-property} (FPP) \citep{ref:Robinson51,ref:Monderer96JET,ref:Miyasawa61,ref:Sela99}. For strategic-form games, it is well-known that several important classes of games enjoy the FPP, ranging from fully competitive to fully cooperative ones, with no modification of the fictitious play dynamics being used. This is especially a desired property for independent learning with uncoupled dynamics, where the players are oblivious to the structure of the underlying game while learning. 

In stark contrast,  the FPP of MGs remains largely  elusive. Limited results have been established on uncoupled learning dynamics of non-equilibrium adaptation for MGs, as well as using it as the  justifications  for the equilibrium therein. Recently, \cite{ref:Leslie20, ref:Sayin20,baudin2021best}  are the first set of results along this line, with focuses on either zero-sum or identical-interest MGs. Moreover, some learning dynamics \citep{ref:Leslie20,baudin2021best} are not fully independent  in that all the players track a common set of parameters. 
This naturally leads to the following open question we are interested in: 

\begin{center}
{\it Can we design independent learning dynamics  with uncoupled update rules, which enjoy the fictitious-play-property for more than one class of Markov games?}  
\end{center}

To shed light on this open problem, we study the same (synchronous- and model-based version) learning dynamic in \citep{ref:Sayin20}, an uncoupled fictitious-play dynamic that provably converges for zero-sum MGs, and investigate its convergence property in an important class of  games: identical-interest MGs with single-controller. We summarize our contributions as follows. 

\paragraph{Contributions.} We study \textit{two-timescale fictitious-play dynamics} for MGs, with independent and uncoupled update rules that combine the classical fictitious-play in the repeated play of strategic-form games with the $Q$-learning in solving Markov decision processes. We show that this natural learning dynamic  converges to an NE in both $n$-player identical-interest MGs (with single-controller) and two-player zero-sum MGs. In other words, these MGs, standing at two different ends of the MG spectrum, have the FPP. To the best of our knowledge, this appears to be the first fictitious-play type learning dynamics for MGs that enjoys this property. To establish the results, we develop new techniques to handle the challenges due to: 1) non-uniqueness of the NE value and the non-contracting property of the NE operator in identical-interest games; 2) non-monotonicity of the value function estimates when studying the discrete-time updates directly; 3) the deviation from the identical-interest structure of stage-games during learning, caused by the independent and local updates by each player.  
 
\subsection{Related work} 

We summarize the most related literature as follows. 

\paragraph{Fictitious-play dynamics/property.} Fictitious-play, a simple and independent learning dynamic that has been extensively studied for the repeated play of strategic-form games, was first introduced by \cite{brown1951iterative}. The dynamic has then been shown to converge to an equilibrium in multiple classes of strategic-form games, including zero-sum \citep{ref:Robinson51}, identical-interest  \citep{ref:Monderer96JET}, and certain general-sum games \citep{ref:Miyasawa61,ref:Sela99,ref:Berger05,ref:Berger08}. Recall that these games are referred to as having the FPP \citep{ref:Monderer96JET}. 

For MGs, the FPP has not been understood until recently in   \citep{ref:Leslie20,ref:Sayin20,baudin2021best}, which are the most related works to the present one. \cite{ref:Leslie20} presents  a continuous-time best-response dynamic for zero-sum MGs and embeds the discrete-time update into a continuous-time one. A single continuation payoff (common among the players) is maintained by all players, which makes the update rule not fully decoupled. \cite{ref:Sayin20} proposes fictitious play dynamics with uncoupled update rules, also for the zero-sum setting, where the continuation payoffs are updated locally using each player's own belief, yielding a more natural dynamic. Our learning dynamic is  thus also based on that in \citep{ref:Sayin20}. Very recently, \cite{baudin2021best} studies fictitious play for identical-interest MGs. The learning dynamic  also uses a common continuation payoff, and the discrete-time dynamic with convergence guarantees follows a {\it single-timescale} update rule. It is unclear if the same learning dynamic converges in other types of MGs. In fact, studying the convergence of {\it two-timescale} learning dynamics with local updates has been posted as an open question in \citep{baudin2021best}, which is one of the main focuses of the present work. 

\paragraph{Independent learning in MGs.} Besides the fictitious-play dynamics in \citep{ref:Leslie20,ref:Sayin20,baudin2021best}, other independent learning dynamics  have also been proposed for MGs. \cite{ref:Arslan17} studies decentralized $Q$-learning for MGs by focusing only on stationary pure strategies (saying which pure action to play at which state). This restriction allows them to transform the underlying MG into a strategic-form game in which actions correspond to stationary pure strategies (which are finitely many contrary to stationary mixed strategies). Players can learn the payoffs of the associated strategic-form game (without observing others' actions) with coordinated exploration phases in which they do not change their strategies to create a stationary environment. The dynamic presented can converge to a (stationary pure-strategy) equilibrium if the associated normal-form game is weakly acyclic with respect to best (or better) response dynamics. The finite-sample complexity of the algorithm is also established recently in \citep{gao2021finite}. In contrast, our learning dynamic  can converge to a stationary mixed-strategy equilibrium, which is essential for a global convergence result across the MG spectrum, as a pure-strategy equilibrium does not exist in general, e.g., in zero-sum games. \cite{perolat2018actor} develops actor-critic  learning dynamics that are decentralized, for a special class of MGs with a ``multistage''  structure, where  each state is assumed to be visited at most once. In \citep{daskalakis2020independent}, independent policy gradient methods with a two-timescale (asymmetric) stepsizes between players have been studied for the zero-sum setting, with non-asymptotic  convergence guarantees. Later, \cite{sayin2021decentralized} developed decentralized $Q$-learning dynamic that is symmetric, but with only asymptotic convergence guarantees in the zero-sum setting.  More recently, for Markov potential games, which also includes identical-interest MGs as an example, such independent policy gradient algorithms are also shown to converge  \citep{leonardos2021global,zhang2021gradient,fox2021independent,ding2022independent}. For episodic MGs, \citep{jin2021v,song2021can,mao2022provably} establish the regret guarantees of decentralized learning algorithms  in the online exploration setting. 
 
 \paragraph{MGs with single controller.} An important subclass of MGs is the ones with single controller \citep{filar2012competitive,parthasarathy1981orderfield}, where one of the players dominates and controls the transitions of the system dynamics (though the reward functions are still affected jointly by all players). Such a model finds applications in communications,  control, and economics  \citep{bacsar1986dynamic,eldosouky2016single}. It also has natural connection with  sequential (or online) learning \citep{cesa2006prediction,guan2016regret}. Learning in single-controller MGs are mostly focused on the zero-sum case \citep{brafman2000near,guan2016regret,qiu2021provably}. \cite{brafman2000near} studies a model-based approach with polynomial time complexity in  achieving near-optimal return. \cite{guan2016regret} investigates the relationship between regret minimization  and solving single-controller MGs, by reducing this model  to an online linear optimization problem. \cite{qiu2021provably} develops a policy optimization algorithm based on the idea of fictitious play, with regret guarantees in the episodic setting. It is unclear yet if these algorithms also converge to an NE in other classes of MGs.
 
%a new Lyapunov function and  

\subsection{Organization} The rest of the paper is organized as follows. We provide a formulation of MGs (with single controller) in \S\ref{sec:Markov} and describe the fictitious play dynamic in MGs in \S\ref{sec:FP}. We present the main convergence results  and the proof of convergence in \S\ref{sec:result}.
We conclude the paper in \S\ref{sec:conclusion} with some remarks. 

\section{Markov Games with Single Controller}\label{sec:Markov}

Consider an $n$-player MG described by a tuple $\langle S, A, \{r^i\}_{i\in[n]}, p, \gamma \rangle$.\footnote{For easy referral, we set player $i$ as the typical player while $-i:=\{j\in [n]\; |\; j\neq i\}$ corresponds to the set of players other than player $i$.}  The game has \textit{finitely} many states and $S$ denotes the set of states. At each state $s\in S$, each player $i$ can take an action $a^i$ from a \textit{finite} action set $A^i$, and $A=\bigtimes_i A^i$ denotes the set of action profiles $a=(a^i)_{i\in[n]}$.\footnote{The formulation can be extended to state-dependent action sets straightforwardly.} Over discrete-time $k=0,1,2,\ldots$, the state of the game, $s$, transitions to a state $s'$ according to the transition probability $p(s'|s,a)$ depending only on the current state $s$ and action profile $a$. At each stage $k$, each player $i$ receives a \textit{stage-payoff} $r^i(s,a)$ depending only on the current state $s$ and action profile $a$ while the players take actions simultaneously. Their objective is to maximize the discounted sum of their expected stage-payoffs over infinite horizon with the discount factor $\gamma\in [0,1)$.

MGs can be viewed as an extension of Markov decision processes to multi-agent settings. \cite{ref:Shapley53} (and later \cite{ref:Fink64}) showed that there always exists a \textit{Markov stationary} equilibrium in two-player zero-sum (and $n$-player general-sum) MGs such that players take actions according to {stationary} (possibly mixed) strategies depending only on the current state.\footnote{Such equilibrium is also referred to as {\it Markov perfect equilibrium}   \citep{maskin1988theory,maskin1988theoryb}.} We denote the stationary mixed-strategy of player $i$ by $\pi^i:S\rightarrow \Delta(A^i)$.\footnote{We denote the probability simplex over the set $A$ by $\Delta(A)$.} Correspondingly, $\pi=(\pi^i)_{i\in[n]}$ denotes the strategy profile and $\Pi$ denotes the space of strategy profiles, i.e., $\pi\in \Pi$. We define
\begin{equation}\label{eq:u}
u^i(s;\pi) := \mathbb{E}\left\{\sum_{k=0}^{\infty} \gamma^k r^i(s_k,a_k)\;\Big|\;s_0 = s\right\},\quad\forall s\in S\mbox{ and }\pi\in\Pi,
\end{equation}
where $(s_k,a_k)$ denotes the state and action profile at stage $k$, and the expectation is taken with respect to the randomness induced from the stochastic state transitions and mixed strategies of players. With slight abuse of notation, we let $u^i(\pi):= \mathbb{E}\{u^i(s_0,\pi)\}$, where the expectation is taken with respect to the initial state distribution. Therefore, $u^i(\pi)$ corresponds to the discounted sum of expected stage-payoffs of player $i$ under strategy profile $\pi$.

\begin{definition}[Markov Stationary Nash Equilibrium]\label{def:NE}
We say that strategy profile $\pi_*\in \Pi$ is a Markov stationary Nash equilibrium of the $n$-player MG provided that
\begin{equation}
u^i(\pi_*) \geq u^i(\pi^i,\pi_*^{-i}),\quad\forall \pi^i\mbox{ and } i =1,\ldots,n.
\end{equation} 
\end{definition}

Hereafter, NE refers  to Markov stationary Nash equilibrium. We say that an MG has \textit{zero-sum} or \textit{identical-interest} structure if $\sum_i r^i(s,a) = 0$ or $r^i(s,a) = r(s,a)$ for all $(s,a)$ for some $r:S\times A\rightarrow \mathbb{R}$, respectively. In this paper, we focus on \textit{single-controller} MGs where state transitions probabilities depend on the actions of a single player, e.g.,
\begin{equation}\label{eq:pp}
p(s'|s,a) = p(s'|s,a^i),\quad\forall (s,a,s').
\end{equation}
Note that since the reward functions, $r^i(s,a)$'s, are  affected by the joint action of all players, the accumulated expected payoff of player $i$ still depends on the joint strategy of all players. Hence, when the strategy of other players changes over time, the environment faced by one player is still {\it non-stationary}. This is  the key challenge in establishing the convergence of learning in MGs. 

Indeed, single-controller MGs are common models in the literature \citep{parthasarathy1981orderfield,filar2012competitive}, and find broad applications in communications \citep{eldosouky2016single} and traveling inspector problems \citep[Chapter 6]{filar2012competitive}. They also have natural connections with regret minimization for sequential (or online) learning \citep{guan2016regret,cesa2006prediction}. 

\section{Fictitious Play in Markov Games}\label{sec:FP}

Within an MG, stage-wise interactions among players can be viewed as they are playing \textit{auxiliary stage-games} specific to each state whenever the associated state gets visited. In each stage-game, players simultaneously take actions while they can mix their actions independently. Players observe the joint action of all players and receive the associated immediate stage-payoff.  However, the payoffs of these stage-games consist of immediate stage-payoffs and continuation payoffs (due to the objectives \eqref{eq:u} defined over infinite horizon). The players can compute the continuation payoff based on the observations they make. We focus on the question that whether {\it non-equilibrium adaptation} of learning agents can converge to a stationary (mixed-strategy) equilibrium of the underlying MG or not if they adopt learning dynamics similar to the ones studied for strategic-form games with repeated play, such as fictitious play and its variants.

Formally, if player $i$ knew that players $-i$ would play according to $\pi^{-i}$ starting from the next stage, player $i$'s payoff in the auxiliary stage-game associated with state $s$, denoted by $Q^i(s,a;\pi^{-i})$ and called \textit{Q-function}, would satisfy the following fixed-point equation
\begin{align}\label{eq:fixedpoint}
Q^i(s,a;\pi^{-i})=r^i(s,a)+\gamma\cdot \sum_{\tilde s}p(\tilde s|s,a)\max_{\tilde a^i\in A^i}\mathbb{E}_{\tilde a^{-i}\sim\pi^{-i}(s)}\big\{Q^i(\tilde s,\tilde a;\pi^{-i})\big\}\quad\forall (s,a).
\end{align}
This follows from the backward induction principle that player $i$ would always take the actions maximizing her expected utility in \eqref{eq:u}. 
Correspondingly, the value of state $s$, denoted by $v^i(s;\pi^{-i})$ and called \textit{value function}, would be given by 
\begin{equation}
v^i(s;\pi^{-i})=\max_{a^i\in A^i}\mathbb{E}_{a^{-i}\sim\pi^{-i}(s)}\big\{Q^i(s,a;\pi^{-i})\big\},\quad\forall s.
\end{equation}
Furthermore, if player $i$ also knew that players $-i$ would play according to $\pi^{-i}$ in the current auxiliary game, she would take the best response action, denoted by $a_*^i:S\rightarrow A^i$, satisfying
\begin{equation}
a_*^i(s) \in \argmax_{a^i \in A^i} \mathbb{E}_{a^{-i}\sim\pi^{-i}(s)} \big\{Q^i(s,a;\pi^{-i})\big\},\quad\forall s.
\end{equation}

Neither the opponent's strategy nor the $Q$-function are directly available to player $i$ in these auxiliary games. Therefore, each player $i$ can form beliefs on every other player's stationary (mixed) strategy and her (local) $Q$-function based on an erroneous assumption that they are stationary as in the classical fictitious play. Then, they can update these beliefs independently based on the observations they make within the underlying MG. For the ease of exposition, we consider that every player follow the same learning dynamic with the same learning rates (or step sizes) and initializations. Hence, all players $-i$ form the same belief on the stationary strategy of player $i$. We denote this belief by $\belief^i_k:S\rightarrow \Delta(A^i)$ at stage $k$. Similarly, we denote the belief of player $i$ on her $Q$-function at stage $k$ by $\Q^i_k:S\times A\rightarrow \mathbb{R}$. For notational convenience, we also introduce the value function estimates given by
\begin{equation}\label{eq:value}
\vfunc_k^i(s) := \max_{a^i\in A^i} \;\mathbb{E}_{a^{-i}\sim \belief_k^{-i}(s)} \{\Q_k^i(s,a^i,a^{-i})\}
\end{equation}
and the best response action given by
\begin{equation}\label{eq:action}
a_k^i(s) \in \argmax_{a^i\in A^i} \;\mathbb{E}_{a^{-i}\sim \belief_k^{-i}(s)} \{\Q_k^i(s,a^i,a^{-i})\}.
\end{equation}
Correspondingly, we have $\vfunc_k^i(s) = \mathbb{E}_{a^{-i}\sim \belief_k^{-i}(s)} \{\Q_k^i(s,a^i_k(s),a^{-i})\}$.

The players always take the best response \eqref{eq:action} according to the beliefs they form and they update their beliefs according to an update rule combining the classical fictitious play and $Q$-learning together. From player $i$'s viewpoint, the update rule is given by
\begin{subequations}\label{eq:updateFP}
\begin{align}
& \belief_{k+1}^{j}(s) = \belief_k^{j}(s) +  \alpha_{k}\Big(a_k^{j}(s) - \belief_k^{j}(s)\Big),\quad \forall j\neq i \mbox{ and } s\in S,\label{eq:updatepi}\\
& \Q_{k+1}^i(s,a) = \Q_k^i(s,a) + \beta_k\left(r^i(s,a) + \gamma \sum_{\tilde{s}}p(\tilde{s}|s,a) \vfunc_k^i(\tilde{s}) - \Q_k^i(s,a)\right),\quad\forall (s,a),\label{eq:updateQ}
\end{align}
\end{subequations}
where $\{\alpha_k,\beta_k\in (0,1)\}_{k\geq 0}$ are step sizes and the beliefs are initialized as, e.g., $\belief^{j}(s) = \frac{1}{|A^j|}\mathbf{1}$ and $\Q_0^i(s,a) = 0$ for all $(s,a)$.\footnote{Consider actions as pure strategies, i.e., $a^i\in A^i\subset \Delta(A^i)$.} In \eqref{eq:updatepi}, $\belief_k^j$ gets updated to a convex combination of the current action and the previous belief. On the other hand, in \eqref{eq:updateQ}, $\Q_k^i$ gets updated to a convex combination of the $Q$-function realized (according to one step iteration of the fixed-point equation \eqref{eq:fixedpoint} based on the value function estimate \eqref{eq:value}) and the previous belief. The weights of the new observations in these convex combinations are determined according to the step sizes $\{\alpha_k,\beta_k\}_{k\geq 0}$.

Note that if there was a single state, then the underlying MG would reduce to the repeated play of a strategic-form game, and correspondingly, \eqref{eq:updateFP} would reduce to \eqref{eq:updatepi} for which $\Q_k^i\equiv r^i$, which is indeed the classical fictitious play dynamic. On the other hand, if there was a single player, then the MG would reduce to a Markov decision process, and correspondingly, \eqref{eq:updateFP} would reduce to \eqref{eq:updateQ}, which is indeed the $Q$-value iteration with smoothing updates  (whose model-free version is known as $Q$-learning). The learning dynamic in \eqref{eq:updateFP} combines them together with different step sizes $\{\alpha_k,\beta_k\}_{k\geq 0}$ for learning in MGs.

\begin{remark}{Comparison to existing related  learning dynamics}
The learning dynamic in \eqref{eq:updateFP} is a synchronous and model-based version of the fictitious play dynamics in \citep{ref:Sayin20} focusing on learning in zero-sum MGs.\footnote{The update \eqref{eq:updateFP} is more like a computational method similar to the ones in \citep{ref:Leslie20,baudin2021best} contrary to \citep{ref:Sayin20} since players play the auxiliary stage-game associated with each state at every stage. This yields a relaxation on the convergence guarantees by not requiring the underlying Markov chain to ensure infinitely often visit at every state.} Two important features of the learning dynamic are: 1) the belief update and the Q-function update are performed in a two-timescale fashion; 2) each player maintains her own local estimates of the Q-functions, which are generally not common among players. In contrast, in the closely related recent  works \citep{ref:Leslie20,baudin2021best}, a single continuation
payoff (common among players) is assumed to be maintained during learning. This way, the stage-games encountered during learning are always zero-sum \citep{ref:Leslie20} or identical-interest \citep{baudin2021best}, and some {\it implicit}  coordination among the players is required. Our learning dynamic is coordination-free and completely uncoupled, and are thus believed to be more natural. In fact, studying  this two-timescale learning dynamic with independent $Q$-function updates has been posted as an interesting open question in \citep{baudin2021best}, with non-trivial technical challenges to address. Finally, another motivation of studying \eqref{eq:updateFP} is to find a unified learning dynamic that converges for both zero-sum and identical-interest MGs, i.e., being agnostic to the types of games, a desired property of uncoupled  dynamics. 
\end{remark}

\section{Convergence Results and Proofs}\label{sec:result}

Recall that the classical fictitious play is known to converge to an NE in certain but important classes of strategic-form games played repeatedly, such as zero-sum and identical-interest ones. The following theorem shows that the two-timescale fictitious-play dynamic in Eq.  \eqref{eq:updateFP} possesses similar universality by converging to an NE in both two-player zero-sum and multi-player identical-interest MGs with single controller. The proof is provided later in \S\ref{sec:proof}. 

\begin{theorem}\label{thm:main}
The update \eqref{eq:updateFP} converges to an NE, described in Definition \ref{def:NE}, in single-controller MGs with two-player zero-sum or multi-player identical-interest structure provided that the step sizes satisfy the usual two-timescale learning conditions that
\begin{itemize}
\item[(i)] Vanishing rates: $\alpha_k\rightarrow 0$ and $\beta_k\rightarrow 0$, as $k\rightarrow\infty$,
\item[(ii)] Sufficiently slow decay: $\sum_{k\geq 0} \alpha_k = \infty$ and $\sum_{k\geq 0} \beta_k = \infty$,
\item[(iii)] Sufficiently fast decay: $\sum_{k\geq 0} \alpha_k^2 < \infty$,  
\item[(iv)] Two-timescale rates: $\alpha_k\geq \beta_k$ for all $k\geq 0$ and $\beta_{k}/\alpha_k \rightarrow 0$, as $k\rightarrow\infty$.
\end{itemize}
Particularly, there exists $Q_*:S\times A\rightarrow \mathbb{R}$ such that
$$
\lim_{k\rightarrow\infty}\Q_k^i(s,a) = \Q_*(s,a)\quad\forall (i,s,a).
$$

Furthermore, in the zero-sum case, $Q_*$ corresponds to the $Q$-function associated with some stationary equilibrium $\pi_*=(\pi_*^i)_{i\in[n]}$ of the underlying game and
$$
\lim_{k\rightarrow\infty}\pi_k^i(s) = \pi_*^i(s),\quad\forall (i,s).
$$
On the other hand, in the identical-interest case, if the auxiliary stage-game of each state $s$ with the common payoff $\Q_*(s,\cdot)$ has finitely or countably many equilibria, then we also have that $Q_*$ corresponds to the $Q$-function associated with some stationary equilibrium $\pi_*=(\pi_*^i)_{i\in[n]}$ of the underlying game and
$$
\lim_{k\rightarrow\infty}\pi_k^i(s) = \pi_*^i(s),\quad\forall (i,s).
$$
\end{theorem}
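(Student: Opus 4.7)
The plan is to exploit the two-timescale structure of \eqref{eq:updateFP}: since $\beta_k/\alpha_k \to 0$, the belief update \eqref{eq:updatepi} evolves on a fast timescale while the $\Q$-function update \eqref{eq:updateQ} evolves on a slow one. I would invoke the ODE method for two-timescale stochastic approximation, treating the $\Q_k^i$ as quasi-static on the fast timescale and, conversely, assuming the beliefs have already reached equilibrium on the slow timescale. The first step is to show that the continuous-time limit of the fast dynamics, at each state $s$, is a continuous-time fictitious play in the auxiliary stage-game with payoff matrices $\Q_k^i(s,\cdot,\cdot)$. The single-controller assumption \eqref{eq:pp} is crucial here: the continuation-payoff term $\gamma\sum_{\tilde s} p(\tilde s|s,a)\vfunc_k^i(\tilde s)$ in \eqref{eq:updateQ} depends on the non-controllers' actions only through $\vfunc_k^i$, since $p(\tilde s|s,a)=p(\tilde s|s,a^i)$ factors the transition through the controller alone. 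Combined with symmetric initializations, symmetric step sizes, and either $\sum_i r^i \equiv 0$ or $r^i\equiv r$, this preserves the zero-sum or identical-interest structure of the auxiliary stage-games across players, so the classical FPP in strategic-form games (Robinson for the zero-sum case, Monderer--Shapley for the identical-interest case) yields the required equilibrium tracking on the fast timescale.

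The second step is the slow-timescale analysis. Once beliefs track stage-game NE, the $\Q$-function update reads, in continuous time, $\dot{\Q}^i(s,a) = r^i(s,a) + \gamma\sum_{\tilde s} p(\tilde s|s,a) V^i(\tilde s) - \Q^i(s,a)$, where $V^i(\tilde s)$ is the NE value of the stage-game $\Q^i(\tilde s,\cdot,\cdot)$. In the zero-sum case, the NE value is unique and the induced operator is a $\gamma$-contraction in the sup-norm, so $\Q_k^i$ converges to a common fixed point $\Q_*$ satisfying Shapley's equation; convergence of $\pi_k^i$ to an NE then follows from the fast-timescale tracking. In the identical-interest case, the NE value is not uniquely determined and the induced operator is not a contraction. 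Here I would adapt the Monderer--Shapley potential to the MG setting, using the common payoff as a Lyapunov function along the slow flow to show it is asymptotically non-decreasing, and then combine boundedness of $\Q_k^i$ with the finite/countable-equilibria assumption on the limiting stage-game to rule out oscillation between distinct equilibria and pin down a single limit $\Q_*$.

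The main obstacle I expect is bounding $\|\Q_k^i - \Q_k^j\|$ for $i\neq j$ along the slow timescale, i.e., showing that the game-structural symmetry enabling the fast-timescale analysis is not merely assumed but dynamically preserved. Each player forms her own $\vfunc_k^i$ from her own local beliefs about the others, so a priori the local $\Q$-estimates can drift apart even though the rewards and transitions the players face are the same. To close this gap, I would set up a coupled system for $\Q_k^i - \Q_k^j$ whose drift is driven solely by $\vfunc_k^i - \vfunc_k^j$, and show this difference vanishes because the NE value of the common-payoff stage-game does not depend on the identity of the player (uniquely determined in the zero-sum case; equal to the shared payoff in the identical-interest case). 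The discrete-time non-monotonicity of $\vfunc_k^i$ forces me to carry out this bound through the continuous-time limit enabled by conditions (i)--(iv), and only afterwards invoke the respective strategic-form FPP; handling this interplay — contraction failure, local updates, and the discrete-to-continuous passage — is where I expect the bulk of the technical work to lie.
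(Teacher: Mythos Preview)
Your two-timescale scaffolding matches the paper's intuition, but there is a genuine gap in how you propose to control the drift between the local $\Q$-estimates in the identical-interest case. You plan to show $\vfunc_k^i - \vfunc_k^j \to 0$ by arguing that ``the NE value of the common-payoff stage-game does not depend on the identity of the player.'' This fails on two counts. First, $\vfunc_k^i(s) = \max_{a^i}\mathbb{E}_{a^{-i}\sim\pi_k^{-i}(s)}\{\Q_k^i(s,a)\}$ is \emph{not} an NE value; it is player $i$'s unilateral best-response value against the current belief, and even when $\Q_k^i \equiv \Q_k^j$ these quantities differ because each player maximizes over her own coordinate. Second, as you yourself note, identical-interest stage-games admit multiple equilibria with distinct values, so the intended argument cannot select a common limit. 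Hence your claim that symmetric initialization ``preserves the identical-interest structure of the auxiliary stage-games'' is incorrect: after one step the games with payoffs $\Q_k^i$ are generically not identical-interest, and Monderer--Shapley cannot be invoked as stated.

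The paper does not drive $\Q_k^i - \Q_k^j$ to zero. Instead it observes that, by unrolling \eqref{eq:updateQ}, the difference $\Q_k^i(s,a) - \Q_k^j(s,a)$ equals $\gamma\sum_{s'} p(s'|s,a)\,c_k^{ij}(s')$ for a factor $c_k^{ij}$ independent of $a$; the single-controller assumption $p(s'|s,a)=p(s'|s,a^i)$ then makes this difference a function of $(s,a^i)$ alone. Thus any non-controller $j$ faces the \emph{same} best-response problem whether she uses $\Q_k^j$ or $\Q_k^i$: the stage-games are \emph{strategically equivalent} to identical-interest games even though they are not literally identical-interest. This is what feeds the fast-timescale analysis. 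On the slow timescale, the paper does not use a Lyapunov argument on a limiting ODE; it works directly in discrete time by tracking the scalar $\uu_k^i := \min_{(s,a)}\{r(s,a)+\gamma\sum_{s'}p(s'|s,a)\,\mathbb{E}_{a'\sim\pi_k(s')}\{\Q_k^i(s',a')\} - \Q_k^i(s,a)\}$, establishing a one-step recursion $\uu_{k+1}^i \geq (1-(1-\gamma)\beta_k)\,\uu_k^i + \ue_k$ with $\sum_k|\ue_k|<\infty$ (the strategic-equivalence observation and condition (iii) are what make $\ue_k$ absolutely summable), and deducing that $\liminf_{k_1}\inf_{k_2\geq k_1}\sum_{k=k_1}^{k_2}\beta_k\uu_k^i \geq 0$, which forces convergence of the bounded sequence $\Q_k^i$. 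Your ``main obstacle'' is real, but the resolution is strategic equivalence via the single-controller structure, not vanishing of $\vfunc_k^i-\vfunc_k^j$.
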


We emphasize that the conditions listed are sufficient to ensure convergence of the update \eqref{eq:updateFP} in both classes of games. For example, the dynamic in \eqref{eq:updateFP} can converge to an equilibrium in two-player zero-sum MGs without Assumption $(iii)$ on sufficiently fast decay of $\alpha_k$. On the other hand, \eqref{eq:updateFP} can converge to an equilibrium in identical-interest MGs with single controller also in the single-timescale scheme where $\alpha_k=\beta_k$. Furthermore, the additional condition for the convergence of beliefs is standard in the analysis of {\it discrete-time} learning dynamics in potential/identical-interest  games \citep{heliou2017learning,ref:Benaim05,candogan2013dynamics}  (except the seminal result \citep{ref:Monderer96JET}).

The following corollary to Theorem \ref{thm:main} shows that the convergence result can be generalized to the case where the stage-payoffs satisfy the following potential-game-like condition similar to the case in one-shot games. The proof is deferred to Appendix \ref{app:potential}.

\begin{corollary}\label{cor:potential}
Suppose that the step sizes satisfy the conditions listed in Theorem \ref{thm:main} and the stage-payoff functions satisfy
\begin{equation}\label{eq:condition}
r^{j}(s,\tilde{a}^j,a^{-j}) - r^{j}(s,a) = r^i(s,\tilde{a}^j,a^{-j}) - r^i(s,a),\quad\forall (s,a), \tilde{a}^j,\mbox{ and }j\neq i
\end{equation}
given that player $i$ is the single controller. Then, the update \eqref{eq:updateFP} converges to an equilibrium in single-controller MGs. Particularly, there exists $Q_*^i:S\times A\rightarrow \mathbb{R}$ for each $i$, which is not necessarily common now, such that
$$
\lim_{k\rightarrow\infty}\Q_k^i(s,a) = \Q_*^i(s,a)\quad\forall (i,s,a).
$$
Furthermore, if the auxiliary stage-game of each state $s$ with the payoffs $\{\Q_*^i(s,\cdot)\}_{i\in[n]}$ has finitely or countably many equilibria, then we also have that $\{Q_*^i\}_{i\in [n]}$ correspond to the $Q$-functions associated with some stationary equilibrium $\pi_*=(\pi_*^i)_{i\in[n]}$ of the underlying game and
$$
\lim_{k\rightarrow\infty}\pi_k^i(s) = \pi_*^i(s),\quad\forall (i,s).
$$
\end{corollary}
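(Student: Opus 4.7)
My plan is to reduce the corollary to the identical-interest part of Theorem~\ref{thm:main} by identifying a common ``potential'' $Q$-function that governs the belief dynamics. The first step is to show by induction on $k$ the invariant
$$
\Q^j_k(s,\tilde{a}^j, a^{-j}) - \Q^j_k(s,a) = \Q^i_k(s,\tilde{a}^j, a^{-j}) - \Q^i_k(s,a), \quad \forall (s,a,\tilde{a}^j) \text{ and } j\neq i,
$$
so that the auxiliary stage-game with payoffs $\{\Q^j_k(s,\cdot)\}_j$ is at every stage a potential game with potential $\Q^i_k(s,\cdot)$ for unilateral deviations by non-controller players. The base case follows from zero initialization. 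For the inductive step, taking the difference of~\eqref{eq:updateQ} evaluated at two action profiles that differ only in the $j$-th coordinate, the continuation term drops out because~\eqref{eq:pp} gives $p(\tilde s|s,\tilde{a}^j,a^{-j}) = p(\tilde s|s,a^i)$, and the stage-payoff difference matches by~\eqref{eq:condition}. Equivalently, $\Q^j_k - \Q^i_k$ depends only on $(s,a^{-j})$.

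Next I would exploit this invariant to reduce the dynamics to the identical-interest case. Since $\Q^j_k - \Q^i_k$ is constant in $a^j$, the argmax in~\eqref{eq:action} for each $j\neq i$ against $\pi^{-j}_k$ is unchanged when $\Q^j_k$ is replaced by $\Q^i_k$; consequently every belief update~\eqref{eq:updatepi} coincides with what the algorithm would output in the identical-interest single-controller MG whose common stage-payoff is $r^i$. In parallel, the update~\eqref{eq:updateQ} for $\Q^i_k$ already involves only $r^i$, $\Q^i_k$ and $\pi^{-i}_k$. Hence the pair $(\Q^i_k, \{\pi^j_k\}_j)$ evolves autonomously and matches the two-timescale dynamics analyzed in the identical-interest case of Theorem~\ref{thm:main}. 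Moreover, by the potential property from Step~1, the NE of the stage game with the all-identical payoffs $\Q^i_*(s,\cdot)$ coincide with those of the stage game with payoffs $\{\Q^j_*(s,\cdot)\}_j$, so the equilibrium-isolation hypothesis transfers. Applying Theorem~\ref{thm:main} then yields $\Q^i_k\to \Q^i_*$ and $\pi^j_k \to \pi^j_*$, where $\pi_*$ is an NE of the original MG.

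For $j \neq i$, I would propagate the convergence to $\Q^j_k$. Using Step~1, for any reference action $\bar{a}^j$,
$$
\vfunc^j_k(s) = \vfunc^i_k(s) + \mathbb{E}_{a^{-j}\sim \pi^{-j}_k(s)}\big\{\Q^j_k(s,\bar{a}^j,a^{-j}) - \Q^i_k(s,\bar{a}^j,a^{-j})\big\},
$$
so that~\eqref{eq:updateQ} for $\Q^j_k$ is a smoothing recursion whose target decomposes into a converging piece (driven by $r^j$, $\pi^{-j}_k$, and $\vfunc^i_k$) and a self-coupling through $\Q^j_k - \Q^i_k$ damped by $\gamma\in[0,1)$. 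Since $\pi^{-j}_k$ and $\vfunc^i_k$ converge and $\{\beta_k\}$ satisfies the Robbins--Monro conditions, a standard non-expansive stochastic-approximation argument gives $\Q^j_k \to \Q^j_*$, where $\Q^j_*$ solves the fixed-point equation~\eqref{eq:fixedpoint} with opponents' strategies $\pi^{-j}_*$ and stage-payoff $r^j$, i.e., the player-$j$ $Q$-function associated with the NE $\pi_*$.

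The main obstacle is Step~1, the lifting of the stage-payoff condition~\eqref{eq:condition} to a $Q$-function level invariant; this is also the only place the single-controller assumption is essential, since it prevents the continuation term from breaking the potential property under deviations of non-controller players. Once Step~1 is established, Steps~2 and~3 follow by directly invoking Theorem~\ref{thm:main} and a routine perturbation argument, which is why the corollary admits a concise argument that I would place in the appendix.
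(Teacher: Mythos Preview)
Your proposal is correct and essentially equivalent to the paper's argument: both lift condition~\eqref{eq:condition} to the $Q$-function level via the single-controller structure, and your inductive invariant that $\Q_k^j-\Q_k^i$ is constant in $a^j$ immediately gives $\Gamma_k^{ij}=\Delta_k^j\geq 0$, which is exactly the observation the paper records (using the explicit expansion~\eqref{eq:QQ} in place of induction) and after which the proof of Theorem~\ref{thm:main} goes through unchanged. Your reduction-to-identical-interest framing is just a more explicit packaging of the same fact.

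One small correction in Step~3: the displayed identity for $\vfunc_k^j(s)$ is wrong, because $\vfunc_k^i(s)=\max_{a^i}\mathbb{E}_{\pi_k^{-i}}\{\Q_k^i\}$ is a maximum over $a^i$, whereas your invariant only lets you replace $\max_{a^j}\mathbb{E}_{\pi_k^{-j}}\{\Q_k^j\}$ by $\max_{a^j}\mathbb{E}_{\pi_k^{-j}}\{\Q_k^i\}$, a different quantity. This does not damage the conclusion: once $\pi_k\to\pi_*$ is established, the recursion~\eqref{eq:updateQ} for $\Q_k^j$ tracks the $\gamma$-contraction $Q\mapsto r^j+\gamma\sum_{s'}p(s'|\cdot)\max_{a^j}\mathbb{E}_{\pi_*^{-j}(s')}\{Q(s',\cdot)\}$ (player~$j$'s Bellman operator against fixed opponents), and convergence follows directly without the decomposition you wrote.
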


Zero-sum games and identical-interest games stand at the two extreme ends of the game spectrum from fully competitive to fully cooperative. They possess distinct features. For example, the equilibrium value of a zero-sum (strategic-form) game is {\it unique} even though there may exist multiple equilibria. Furthermore, minimax value of a game is a non-expansive function like the maximum value. Therefore, \cite{ref:Shapley53} could introduce a contraction operator for two-player zero-sum MGs as a counterpart of the Bellman operator in Markov decision processes. Later \cite{ref:Leslie20,ref:Sayin20} showed that the contraction property in the evolution of the value function estimates could be approximated with asymptotically negligible error also in \textit{non-equilibrium} learning dynamics through a two-timescale learning scheme. 

\subsection{Proof of Theorem \ref{thm:main}}\label{sec:proof}

The proof of Theorem \ref{thm:main} for two-player zero-sum MGs (with single controller) follows from the identical steps in \citep[Theorem 4.3]{ref:Sayin20} where the convergence properties of the asynchronous version of \eqref{eq:updateFP} is characterized. On the other hand, equilibrium values of an identical-interest game are not necessarily unique. In the absence of powerful non-expansiveness and correspondingly contraction property, we need a different technical tool to characterize its convergence properties.

The main premise behind the proof for $n$-player identical-interest case is that the limiting differential inclusion of the \eqref{eq:updatepi} is the continuous-time best response dynamic in an identical-interest game due to the two-timescale framework. Therefore, the maximum expected values of auxiliary stage games are monotonically non-decreasing in this continuous-time approximation. Correspondingly, if the value function estimates were monotonically non-decreasing in the original discrete-time updates, then the $Q$-function estimates would also be monotonically non-decreasing. Hence, we could have concluded their convergence since they are bounded by the update rule \eqref{eq:updateQ}. However, the discrete-time dynamic does not necessarily lead to an increase in the value function estimates across subsequent stages in general. To address this challenge, we consider the deviation from the monotonicity across multiple stages rather than just subsequent ones, as in \citep{baudin2021best}. 

\begin{remark}{Challenge due to {\it Independent} $Q$-update}\label{rem:independent}
Similar to the zero-sum case in \citep{ref:Sayin20}, another challenge arises due to the deviation from the identical-interest structure in the auxiliary stage-games since players update their beliefs on the $Q$-function according to \eqref{eq:updateQ} via the maximum expected continuation payoff they believe they would get, as described in  \eqref{eq:value}. This challenge would not be observed if players had a {\it common} $Q$-function estimates, i.e., $\Q_k^i\equiv \Q_k^{}$ for all $i$ for some $\Q_k$. For example, \cite{baudin2021best} uses an update similar to 
\begin{equation}\label{eq:updateQ2}
\Q_{k+1}^i(s,a) = \Q_k^i(s,a) + \beta_k\left(r(s,a) + \gamma \sum_{s'}p(s'|s,a) \mathbb{E}_{a'\sim \pi_k(s')}\{\Q_k^i(s',a')\} - \Q_k^i(s,a)\right),
\end{equation}
for all $(i,s,a)$, rather than \eqref{eq:updateQ}.
Such an update guarantees that each auxiliary stage game has identical-interest structure
provided that $\Q_0^i=\Q_0^{}$ for all $i$. However, \eqref{eq:updateQ2} is still prone to deviation from the identical-interest structure without a common initialization. Computation of $\mathbb{E}_{a'\sim \pi_k(s')}\{\Q_k^i(s',a')\}$ by player $i$ also implies that player $i$ forms belief $\pi^i$ on her own strategy as if she is playing according to a stationary mixed-strategy even though she always takes (greedy) best response actions against her opponents. Therefore, the independent $Q$-update \eqref{eq:updateQ} contrary to the coupled one \eqref{eq:updateQ2} is a relatively more natural dynamic  for practical applications.  
The characterization of its convergence properties can provide a stronger justification for equilibrium analysis in MGs. To address this challenge, we focus on single-controller MGs, where the auxiliary stage games are \textit{strategically equivalent} to identical-interest games if the beliefs on $Q$-functions are initialized the same and become strategically equivalent to identical-interest games at a sufficiently fast rate if they do not have common initialization. 
\end{remark}

Given the update \eqref{eq:updateQ}, we define
\begin{equation}\label{eq:V}
\XiV_k^i(s,a) := r(s,a) + \gamma \sum_{s'\in S}p(s'|s,a) \vfunc_k^i(s') - \Q_k^i(s,a),\quad\forall (i,s,a)
\end{equation}
such that 
\begin{equation}
\Q_{k+1}^i(s,a)= \Q_k^i(s,a) + \beta_k \XiV_k^i(s,a)\quad\forall (i,s,a).
\end{equation}
Note that $\Q_k^i(s,a)$, for each $(i,s,a)$, is bounded from above by $\frac{1}{1-\gamma}\max_{(s,a)}r(s,a)$ by the definition of the update \eqref{eq:updateQ} and \eqref{eq:value} since the step size $\{\beta_k\in (0,1)\}_{k\geq 0}$. 

The following proposition provides a monotonicity-like condition on the changes of the estimates accumulated across multiple stages (not just the subsequent ones) to prove the convergence of $\{\Q_k^i(s,a)\}_{k\geq 0}$. The proof is deferred to Appendix \ref{app:step0}. 

\begin{proposition}\label{prop:step0}
Consider a (real-valued) bounded sequence $\{\Q_k^i(s,a)\}_{k\geq 0}$ for each $(i,s,a)\in [n]\times S\times A$ (with $n<\infty$ and $|S\times A|<\infty$) evolving according to
\begin{equation}
\Q_{k+1}^i(s,a)= \Q_k^i(s,a) + \beta_k \XiV_k^i(s,a),\quad\forall (i,s,a)
\end{equation}
for some $\{\XiV_k^i(s,a)\}_{k\geq 0}$ for each $(i,s,a)$ and step size $\beta_k\geq 0$. If we have
\begin{equation}\label{eq:bound}
\liminf_{k_1\rightarrow\infty}\inf_{k_2\geq k_1} \sum_{k=k_1}^{k_2} \beta_k \XiV_k^i(s,a)\geq 0,\quad\forall (i,s,a),
\end{equation}
then there exists $Q_*^i:S\times A\rightarrow\mathbb{R}$ such that
\begin{equation}
\lim_{k\rightarrow\infty} Q_k^i(s,a) = Q_*^i(s,a),\quad\forall (i,s,a).
\end{equation}
\end{proposition}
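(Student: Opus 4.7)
The plan is to fix a tuple $(i,s,a)$ throughout and exploit the telescoping structure of the update. Writing
\[
\Q_{k_2+1}^i(s,a) - \Q_{k_1}^i(s,a) \;=\; \sum_{k=k_1}^{k_2} \beta_k \XiV_k^i(s,a),
\]
the hypothesis \eqref{eq:bound} translates directly into an ``eventual almost non-decrease'' property: for every $\epsilon>0$ there exists $K=K(\epsilon,i,s,a)$ such that
\[
\Q_{k_2+1}^i(s,a) - \Q_{k_1}^i(s,a) \;\geq\; -\epsilon \qquad \text{for all } k_2 \geq k_1 \geq K.
\]
This reformulation is the only consequence of \eqref{eq:bound} that I will use, and it reduces the proposition to the classical fact that a bounded sequence which is ``almost monotonically non-decreasing'' in this sense must converge.

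Next, I would combine this property with boundedness of $\{\Q_k^i(s,a)\}_{k\geq 0}$ to rule out oscillation. Set $L:=\limsup_k \Q_k^i(s,a)$ and $\ell:=\liminf_k \Q_k^i(s,a)$, both finite by boundedness. Given $\epsilon>0$, fix the corresponding $K$ and choose $n\geq K$ along a subsequence with $\Q_n^i(s,a)\to L$. The eventual-almost-non-decrease bound gives $\Q_m^i(s,a) \geq \Q_n^i(s,a)-\epsilon$ for every $m\geq n$, so $\ell = \liminf_m \Q_m^i(s,a) \geq \Q_n^i(s,a)-\epsilon$. Letting $n\to\infty$ along the chosen subsequence yields $\ell \geq L-\epsilon$; since $\epsilon$ was arbitrary, $\ell \geq L$, and the reverse inequality is automatic, so $\ell=L$. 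This common value is the desired limit $\Q_*^i(s,a)$. The finiteness assumptions $n<\infty$ and $|S\times A|<\infty$ only enter to let us take a single $K$ that works simultaneously across the finitely many tuples $(i,s,a)$ whenever that is needed downstream.

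I do not anticipate a serious technical obstacle here: the proposition is essentially a one-line reformulation of the familiar principle that an almost-monotone bounded sequence converges. Its real significance in the broader argument is strategic rather than technical: it isolates the single quantitative property of the learning dynamics that has to be verified in order to conclude convergence of every player's $\Q$-function estimate, and thereby reduces the analysis of the identical-interest case to establishing \eqref{eq:bound} in later lemmas, which is where the single-controller structure, the two-timescale schedule, and the deferred continuous-time best-response argument will actually be brought to bear.
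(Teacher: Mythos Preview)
Your proposal is correct and follows essentially the same approach as the paper: both use the telescoping identity $\Q_{k_2+1}^i(s,a)-\Q_{k_1}^i(s,a)=\sum_{k=k_1}^{k_2}\beta_k\XiV_k^i(s,a)$ to convert the hypothesis into an eventual-almost-non-decrease property, and then combine this with boundedness to conclude convergence. Your version is in fact more detailed---the paper simply asserts that the conclusion follows ``due to the boundedness of the estimates'' after establishing that for any $\epsilon>0$ there exists $\kappa$ with $\Q_k^i(s,a)\geq \Q_\kappa^i(s,a)-\epsilon$ for all $k\geq\kappa$, whereas you explicitly carry out the $\limsup/\liminf$ comparison.
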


Henceforth, we focus on proving \eqref{eq:bound} (through a more tractable lower bound) to conclude the convergence of \eqref{eq:updateQ}. To this end, we define an auxiliary parameter bounding $\XiV_k^i(s,a)$ for each $(s,a)$ from below as
\begin{equation}\label{eq:uu}
\uu_k^i := \min_{(s,a)}\left\{r(s,a) + \gamma \sum_{s'\in S} p(s'|s,a) \mathbb{E}_{a'\sim \pi_k(s)}\{\Q_k^i(s',a')\} - \Q_k^i(s,a)\right\},
\end{equation}
i.e., we have $\uu_k^i\leq \XiV_k^i(s,a)$ for all $(s,a)$, since
$\vfunc_k^i(s') - \mathbb{E}_{a'\sim \pi_k(s)}\{\Q_k^i(s',a')\} \geq 0$ for each $s'$ by the definition of $\vfunc_k^i$, as described in \eqref{eq:value}. Instead of \eqref{eq:bound}, we can, now, focus on proving the asymptotic non-negativity of the more tractable lower bound: 
\begin{equation}\label{eq:bound2}
\boxed{\liminf_{k_1\rightarrow\infty}\inf_{k_2\geq k_1} \sum_{k=k_1}^{k_2} \beta_k \uu_k^i \geq 0}.
\end{equation}

If we could have shown that there exists some $\kappa$ such that $\uu_k^i\geq 0$ for all $k\geq \kappa$, then we would have concluded \eqref{eq:bound2}. We do not necessarily have it. On the other hand, showing the asymptotic non-negativity of $\uu_k^i$ would not be sufficient to conclude \eqref{eq:bound2}. Hence, we look for some stronger conditions. The following lemma provides a characterization of the evolution of $\{\uu_k^i\}_{i\geq 0}$ (from below) in terms of some \textit{absolutely summable} sequence. The proof is deferred to Appendix \ref{app:stepi}.

\begin{lemma}\label{lem:stepi}
Given that players follow the dynamic  described in \eqref{eq:updateFP} in an identical-interest MG with single controller $i$, the evolution of $\uu_k^i$, described in \eqref{eq:uu}, 
satisfies the following inequality:
\begin{equation}\label{eq:seq}
\boxed{\uu_{k+1}^i \geq \uu_k^i (1-(1-\gamma)\beta_k) + \ue_k}
\end{equation} 
for all $k\geq 0$ and for some \textit{absolutely summable} sequence $\{\ue_k\}_{k\geq 0}$.
\end{lemma}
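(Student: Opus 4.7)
The plan is to establish the recursion \eqref{eq:seq} by a direct per-step algebraic computation that rests on three ingredients: (a) a \emph{strategic-equivalence} fact that lets us treat the auxiliary stage-games as identical-interest games with a common payoff $Q_k^i$; (b) the \emph{identical-interest monotonicity}, which says that moving any single player's strategy toward its best response weakly increases the common expected payoff; and (c) the two-timescale condition $\alpha_k \geq \beta_k$, which allows a potentially harmful $O(\beta_k)$ residual to be dominated by a helpful $O(\alpha_k)$ contribution of opposite sign, leaving only an absolutely summable $O(\alpha_k^2)$ remainder.

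First I would prove by induction on $k$ that $Q_k^i(s, a) - Q_k^j(s, a)$ depends on $(s, a)$ only through $(s, a^i)$ for every non-controller $j \neq i$. The base case follows from the common initialization $Q_0^j \equiv 0$, and the inductive step follows from \eqref{eq:updateQ} because the identical rewards cancel and the continuation $\gamma \sum_{s'} p(s'|s, a^i)[\vfunc_k^i(s') - \vfunc_k^j(s')]$ depends on $a$ only through $a^i$, crucially using the single-controller assumption. An immediate consequence is that each non-controller's best response $a_k^j(s)$ computed from $Q_k^j$ via \eqref{eq:action} also maximizes $a^j \mapsto \mathbb{E}_{a^{-j} \sim \pi_k^{-j}(s)}\{Q_k^i(s, a^j, a^{-j})\}$, because the two objectives differ by a function of $a^i$ that becomes a constant once averaged against $\pi_k^i(s)$. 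With this reduction I would introduce $V_k(s') := \mathbb{E}_{a' \sim \pi_k(s')}\{Q_k^i(s', a')\}$ and $G_k(s, a) := r(s, a) + \gamma \sum_{s'} p(s'|s, a^i) V_k(s') - Q_k^i(s, a)$, so that $\uu_k^i = \min_{s, a} G_k(s, a)$, and derive from \eqref{eq:updateFP} the identity
\begin{align*}
G_{k+1}(s, a) &= (1-\beta_k)\, G_k(s, a) + \gamma \beta_k \sum_{s'} p(s'|s, a^i) \bigl[\mathbb{E}_{\pi_{k+1}(s')}\{\Upsilon_k^i(s', \cdot)\} + V_k(s') - \vfunc_k^i(s')\bigr] \\
&\quad + \gamma \sum_{s'} p(s'|s, a^i)\, \eta_k(s'),
\end{align*}
where $\eta_k(s') := [\mathbb{E}_{\pi_{k+1}(s')} - \mathbb{E}_{\pi_k(s')}]\{Q_k^i(s', \cdot)\}$, using $Q_{k+1}^i = Q_k^i + \beta_k \Upsilon_k^i$ together with the identity $\Upsilon_k^i - G_k = \gamma \sum_{s'} p(s'|s, a^i)[\vfunc_k^i(s') - V_k(s')]$.

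The core estimate is a lower bound on $\eta_k(s')$. Expanding the product measure $\pi_{k+1}(s') = \prod_j [(1-\alpha_k)\pi_k^j(s') + \alpha_k \delta_{a_k^j(s')}]$ to first order in $\alpha_k$ yields $\eta_k(s') = \alpha_k \sum_j [M_j(s') - V_k(s')] + O(\alpha_k^2)$, where $M_j(s') := \mathbb{E}_{a^{-j} \sim \pi_k^{-j}(s')}\{Q_k^i(s', a_k^j(s'), a^{-j})\}$. By the strategic-equivalence reduction from the previous paragraph, $M_j(s') = \max_{a^j} \mathbb{E}_{a^{-j} \sim \pi_k^{-j}(s')}\{Q_k^i(s', a^j, a^{-j})\} \geq V_k(s')$ for every $j$, with equality $M_i(s') = \vfunc_k^i(s')$ when $j = i$. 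Dropping the non-negative $j \neq i$ terms gives $\eta_k(s') \geq \alpha_k[\vfunc_k^i(s') - V_k(s')] - C\alpha_k^2$, where $C$ depends only on $n$ and a uniform bound on $\|Q_k^i\|_\infty$ inherited from the convex-combination form of \eqref{eq:updateQ}. Combining this with the pointwise bound $\mathbb{E}_{\pi_{k+1}(s')}\{\Upsilon_k^i(s', \cdot)\} \geq \uu_k^i$ (valid since $\Upsilon_k^i \geq G_k \geq \uu_k^i$), the $\vfunc_k^i - V_k$ contributions collect into $\gamma(\alpha_k - \beta_k)\sum_{s'} p(s'|s, a^i)[\vfunc_k^i(s') - V_k(s')]$, which is non-negative by Condition $(iv)$ and the dominance $\vfunc_k^i \geq V_k$. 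This leaves $G_{k+1}(s, a) \geq (1-\beta_k)\, G_k(s, a) + \gamma \beta_k\, \uu_k^i - \gamma C\alpha_k^2$, and minimizing over $(s, a)$ delivers \eqref{eq:seq} with $\ue_k = -\gamma C\alpha_k^2$, which is absolutely summable by Condition $(iii)$. The main obstacle is the strategic-equivalence step: without it, the $M_j(s')$ for $j \neq i$ could deviate from their identical-interest best-response values by $O(1)$, leaving an un-summable $O(\alpha_k)$ residue in the bound on $\eta_k(s')$; this is precisely the difficulty highlighted in Remark \ref{rem:independent} that the single-controller identical-interest structure is designed to resolve.
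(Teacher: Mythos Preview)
Your proposal is correct and follows essentially the same route as the paper's proof. Both arguments (a) establish that $Q_k^i - Q_k^j$ depends on $a$ only through $a^i$ (you by induction, the paper via the explicit formula \eqref{eq:delQQ}), so that each non-controller's best response against $Q_k^j$ is also a best response against $Q_k^i$; (b) expand the change in $\mathbb{E}_{\pi_k(s')}\{Q_k^i(s',\cdot)\}$ to first order in $\alpha_k$, obtaining the sum of $n$ unilateral-deviation gains $M_j(s')-V_k(s')$ (the paper's $\Delta_k^i$ and $\Gamma_k^{ij}$) plus an $O(\alpha_k^2)$ remainder; and (c) use $\alpha_k\geq\beta_k$ so that the helpful $\alpha_k[\vfunc_k^i-V_k]$ dominates the harmful $\beta_k[\vfunc_k^i-V_k]$, leaving only the absolutely summable $O(\alpha_k^2)$ term. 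The only cosmetic difference is packaging: you write $G_{k+1}$ directly in terms of $G_k$ and isolate the $(\alpha_k-\beta_k)$ factor explicitly, whereas the paper computes $\XiU_{k+1}^i-\XiU_k^i$ and bundles the same cancellation inside the auxiliary error $e_k^i(s')$ in \eqref{eq:ee}.
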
 

\begin{remarkPlain}
We emphasize that the single-controller identical-interest structure plays an important role in ensuring that there exists such an absolutely summable sequence.
\end{remarkPlain}

Given that $\{\ue_k\}_{k\geq 0}$ are absolutely summable, we can next invoke the following lemma showing that $\{\uu_k^i\}_{k\geq 0}$ satisfying \eqref{eq:seq} also satisfies \eqref{eq:bound2}. Hence, it can also be of interest on its own. The proof is deferred to Appendix \ref{app:stepii}.

%%%%%%%%%%%%%%%%%
\begin{lemma}\label{lem:stepii}
Given step sizes $\{\beta_k\in(0,1)\}_{k\geq 0}$ vanishing, i.e., $\beta_k\rightarrow 0$ as $k\rightarrow\infty$, sufficiently slowly such that $\sum_{k\geq 0}\beta_k = \infty$, consider a sequence $\{\uu_k^i\}_{k\geq 0}$ satisfying \eqref{eq:seq}
for some discount factor $\gamma \in[0,1)$ and some absolutely summable error term $\ue_k$, i.e., $\sum_{k\geq 0}|\ue_k| <\infty$. Then, we have \eqref{eq:bound2}.
\end{lemma}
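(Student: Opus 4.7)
The plan is to reduce the claim to showing that the series $\sum_{k\geq 0}\beta_k(\uu_k^i)^-$ converges, where $(x)^-:=\max(-x,0)$. Once this is in hand, tail-summability immediately gives
$$\sum_{k=k_1}^{k_2}\beta_k\uu_k^i \;\geq\; -\sum_{k=k_1}^{k_2}\beta_k(\uu_k^i)^- \;\geq\; -\sum_{k\geq k_1}\beta_k(\uu_k^i)^- \;\xrightarrow[k_1\to\infty]{}\;0,$$
uniformly in $k_2\geq k_1$, which is exactly \eqref{eq:bound2}. So the whole question becomes: how does one bound the negative part of $\uu_k^i$?

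The first (and key) step is to derive a one-sided recursion for the negative parts. Setting $c:=1-\gamma\in(0,1]$, we have $1-c\beta_k\in(0,1]$ because $\beta_k\in(0,1)$ and $c\leq 1$, so multiplying by $1-c\beta_k$ preserves sign. Applying the subadditivity $\max(0,a+b)\leq\max(0,a)+\max(0,b)$ to the reversed form of \eqref{eq:seq} then yields
$$(\uu_{k+1}^i)^- \;\leq\; (1-c\beta_k)(\uu_k^i)^- + (\ue_k)^- \;\leq\; (1-c\beta_k)(\uu_k^i)^- + |\ue_k|.$$
The crucial gain here is that the signed error $\ue_k$ has been replaced by the nonnegative, absolutely summable $|\ue_k|$.

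Rearranging this inequality as $c\beta_k(\uu_k^i)^- \leq (\uu_k^i)^- - (\uu_{k+1}^i)^- + |\ue_k|$ and telescoping from $k=0$ to $K-1$ gives, uniformly in $K$,
$$c\sum_{k=0}^{K-1}\beta_k(\uu_k^i)^- \;\leq\; (\uu_0^i)^- - (\uu_K^i)^- + \sum_{k=0}^{K-1}|\ue_k| \;\leq\; (\uu_0^i)^- + \sum_{k\geq 0}|\ue_k| \;<\;\infty.$$
Letting $K\to\infty$ shows $\sum_k\beta_k(\uu_k^i)^-<\infty$, and the reduction above completes the argument.

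The principal obstacle is the first step: recognizing that passing to negative parts converts the one-sided inequality \eqref{eq:seq} into a contractive-plus-absolutely-summable recursion. A naive approach based on the direct lower bound $c\sum_{k=k_1}^{k_2}\beta_k\uu_k^i\geq \uu_{k_1}^i-\uu_{k_2+1}^i+\sum_{k=k_1}^{k_2}\ue_k$ fails, because even though $\liminf_k\uu_k^i\geq 0$ can be extracted from \eqref{eq:seq}, controlling $\uu_{k_1}^i-\sup_{k_2\geq k_1}\uu_{k_2+1}^i$ would require $\uu_k^i$ to converge, which is more than one can hope for at this stage. Passing to $(\uu_k^i)^-$ sidesteps this entirely. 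I also note that neither $\sum\beta_k=\infty$ nor $\beta_k\to 0$ is used in the argument above; those hypotheses enter the broader convergence analysis (e.g., to conclude $(\uu_k^i)^-\to 0$) but not the statement \eqref{eq:bound2} itself.
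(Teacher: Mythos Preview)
Your proof is correct and takes a genuinely different route from the paper's. The paper unrolls the recursion \eqref{eq:seq} to bound $\uu_k^i$ from below by $\uu_0^i\prod_{l<k}(1-\tilde\beta_l)+\sum_{l<k}\ue_l\prod_{l<m<k}(1-\tilde\beta_m)$, substitutes this into the partial sum, changes the order of summation, and then uses a telescoping identity for products (Lemma~\ref{lem:aux}) to show each of three resulting pieces vanishes as $k_1\to\infty$; along the way it invokes $\sum_k\beta_k=\infty$ so that the partial products $\prod(1-\tilde\beta_l)$ tend to zero, and it also relies on the (somewhat delicate) claim that $|\ue_k|\leq\beta_k$ eventually. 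Your argument bypasses all of this: passing to negative parts converts \eqref{eq:seq} into the clean one-sided recursion $(\uu_{k+1}^i)^-\leq(1-c\beta_k)(\uu_k^i)^-+|\ue_k|$, and a single telescoping sum gives $\sum_k\beta_k(\uu_k^i)^-<\infty$, from which \eqref{eq:bound2} is immediate. The payoff is a shorter, more robust proof that, as you correctly note, does not actually use $\beta_k\to0$ or $\sum_k\beta_k=\infty$; it needs only $\beta_k\in(0,1)$, $\gamma<1$, and the absolute summability of $\ue_k$. The paper's approach, by contrast, makes the role of the step-size decay visible but at the cost of a longer computation.
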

%%%%%%%%%%%%%%%

Lemmas \ref{lem:stepi} and \ref{lem:stepii} imply \eqref{eq:bound2}, and therefore, \eqref{eq:bound} by the definition of $\uu
_k^i$, as described in \eqref{eq:uu}. Hence, Proposition \ref{prop:step0} yields that $\{\Q_k^i\}_{i\in [n]}$, i.e., \eqref{eq:updateQ}, is convergent. In other words, there exists some $Q_*^i:S\times A\rightarrow \mathbb{R}$ such that
\begin{equation}\label{eq:limQ}
\lim_{k\rightarrow\infty} \Q_k^i(s,a) = Q_*^i(s,a).
\end{equation}
On the other hand, for players other than the controller, say player $j\neq i$, the payoff in the auxiliary stage game $\Q_k^j(s,a)$ is always strategically equivalent to $\Q_k^i(s,a)$, as shown in the proof of Lemma \ref{lem:stepi}. Note that we have not characterized the limit of $\Q_k^i$ yet. 

Lastly, we can conclude that the update \eqref{eq:updateFP} indeed converges to an equilibrium based on the following lemma (which can be viewed as a corollary to \citep[Theorem 4]{ref:Leslie06}). This lemma characterizes the limit set of the fictitious-play in terms of its limiting continuous-time best response dynamic for the cases where the underlying game becomes stationary asymptotically. Hence, it can also be of interest on its own. The proof is deferred to Appendix \ref{app:stepiii}.

%%%%%%%%%%%%%%%
\begin{lemma}\label{lem:stepiii}
Given step sizes $\{\alpha_k\in(0,1)\}_{k\geq 0}$ vanishing, i.e., $\alpha_k\rightarrow 0$ as $k\rightarrow\infty$, sufficiently slowly such that $\sum_{k\geq 0}\alpha_k = \infty$, consider the update of $\pi^i\in\Delta(A^i)$ for each $i\in [n]$ and finite set $A^i$, given by
\begin{equation}\label{eq:updatenew}
\pi_{k+1}^i = \pi_k^i + \alpha_k \left(a_k^i-\pi_k^i\right),\quad\forall i,
\end{equation}
where $a_k^i\in \Delta(A^i)$ satisfies
\begin{equation}
a_k^i \in \argmax_{a\in A} \mathbb{E}_{a^{-i}\sim \pi_k^{-i}}\{Q_k^i(a^i,a^{-i})\}.
\end{equation}
Suppose that $\Q_k^i(a) \rightarrow Q_*^i(a)$ for all $(i,a)$ as $k\rightarrow\infty$ for some  $Q_*^i:A\rightarrow\mathbb{R}$. Then, the limit set of \eqref{eq:updatenew} is a connected internally chain-recurrent set of the following best response differential inclusion
\begin{equation}
\dot{\pi}^i + \pi^i \in \argmax_{a^i\in A^i}\mathbb{E}_{a^{-i}\sim\pi^{-i}}\{\Q_*^i(a^i,a^{-i})\}.
\end{equation} 
\end{lemma}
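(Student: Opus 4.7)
The plan is to view the update \eqref{eq:updatenew} as a discrete-time stochastic approximation of a differential inclusion whose drift is \emph{asymptotically} equal to the limiting best-response correspondence, and then invoke the limit-set theorem for asymptotic pseudo-trajectories of differential inclusions used in \citep{ref:Leslie06}. I would define the time-varying and limiting best-response correspondences
\begin{align*}
\mathrm{BR}_k^i(\pi^{-i}) &:= \argmax_{a^i\in A^i}\mathbb{E}_{a^{-i}\sim\pi^{-i}}\{\Q_k^i(a^i,a^{-i})\},\\
\mathrm{BR}_*^i(\pi^{-i}) &:= \argmax_{a^i\in A^i}\mathbb{E}_{a^{-i}\sim\pi^{-i}}\{\Q_*^i(a^i,a^{-i})\},
\end{align*}
so that $a_k^i\in \mathrm{BR}_k^i(\pi_k^{-i})$ and \eqref{eq:updatenew} reads $\pi_{k+1}^i - \pi_k^i = \alpha_k(a_k^i - \pi_k^i)$. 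The candidate limiting dynamics is the differential inclusion $\dot{\pi}^i + \pi^i \in \mathrm{BR}_*^i(\pi^{-i})$ in the statement of the lemma.

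Next I would verify the standard conditions required to identify the interpolated process with an asymptotic pseudo-trajectory of this limiting DI. The iterates lie in the compact product of simplices $\prod_i\Delta(A^i)$, and the step-size conditions $\alpha_k\to 0$, $\sum_k \alpha_k = \infty$ hold by hypothesis. The correspondence $\mathrm{BR}_*^i$ is non-empty, convex and compact-valued, and upper hemicontinuous in $\pi^{-i}$ as an $\argmax$ of a continuous linear function over a compact set, so the limiting DI is well-posed and admits absolutely continuous solutions. The critical remaining condition is the asymptotic vanishing of the perturbation: because each $A^i$ is finite, the hypothesis $\Q_k^i(a)\to \Q_*^i(a)$ is in fact uniform over $A$; and because the payoff functional $(\Q^i,\pi^{-i})\mapsto \mathbb{E}_{a^{-i}\sim\pi^{-i}}\{\Q^i(a^i,a^{-i})\}$ is jointly continuous, upper hemicontinuity of the argmax correspondence yields $d\big(a_k^i,\mathrm{BR}_*^i(\pi_k^{-i})\big)\to 0$.

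With these ingredients in place, the linear interpolation of $\{\pi_k\}$ at time marks $\tau_k := \sum_{\ell<k}\alpha_\ell$ qualifies as an asymptotic pseudo-trajectory of the limiting best-response DI in the sense of \citep{ref:Benaim05,ref:Leslie06}, and \citep[Theorem 4]{ref:Leslie06} then delivers that its limit set is internally chain-recurrent for that DI. Connectedness follows from the standard observation that the increments satisfy $\|\pi_{k+1}^i-\pi_k^i\|\leq 2\alpha_k\to 0$, so the interpolated curve is uniformly continuous and its $\omega$-limit set is therefore connected.

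The main technical obstacle is precisely the non-stationarity of the drift: one must promote the (a priori only pointwise) convergence $\Q_k^i\to \Q_*^i$ into a uniform-in-state statement on $\mathrm{BR}_k^i\to \mathrm{BR}_*^i$ along the entire trajectory, in order to check the asymptotic-pseudo-trajectory property for a \emph{moving} argmax map. Since $A$ is finite this reduces to a routine upper-hemicontinuity argument, but it is the step that forces us to appeal to the time-varying extension of Bena\"{i}m--Hofbauer--Sorin in \citep{ref:Leslie06} rather than the classical stationary version in \citep{ref:Benaim05}.
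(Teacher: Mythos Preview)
Your plan coincides with the paper's: both reduce to \citep[Theorem 4]{ref:Leslie06} by recognizing the update as (generalized weakened) fictitious play for the limit game $Q_*^i$, and your treatment of connectedness is fine.

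There is, however, a gap in your verification of the perturbation condition. You assert that upper hemicontinuity of the argmax correspondence yields $d\big(a_k^i,\mathrm{BR}_*^i(\pi_k^{-i})\big)\to 0$, but this does not follow. Since $a_k^i$ is a vertex of $\Delta(A^i)$ and $\mathrm{BR}_*^i(\pi_k^{-i})$ is a face, that distance is either zero or bounded below by a fixed positive constant depending only on $|A^i|$; so your claim would force $a_k^i\in\mathrm{BR}_*^i(\pi_k^{-i})$ for all large $k$. But whenever $\pi_k^{-i}$ sits near an indifference surface of $Q_*^i$, an arbitrarily small residual $Q_k^i-Q_*^i$ can flip which pure action is optimal, so $a_k^i\notin\mathrm{BR}_*^i(\pi_k^{-i})$ can recur infinitely often. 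Upper hemicontinuity controls closeness only to the \emph{graph} of $\mathrm{BR}_*^i$, not to the fiber over $\pi_k^{-i}$, and the best-response correspondence is not lower hemicontinuous. The paper instead verifies the actual hypothesis of \citep[Theorem 4]{ref:Leslie06} in payoff terms: from $\max_a|Q_k^i(a)-Q_*^i(a)|=:\epsilon_k\to 0$ one gets
\[
\mathbb{E}_{a^{-i}\sim\pi_k^{-i}}\{Q_*^i(a_k^i,a^{-i})\}\;\geq\;\max_{a^i}\mathbb{E}_{a^{-i}\sim\pi_k^{-i}}\{Q_*^i(a^i,a^{-i})\}-2\epsilon_k,
\]
so $a_k^i$ is an $O(\epsilon_k)$-best response in the \emph{fixed} game $Q_*^i$, which is exactly the generalized-weakened-fictitious-play condition. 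Replacing your distance-to-set claim with this $\epsilon_k$-best-response inequality repairs the argument and recovers the paper's proof.
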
  
%%%%%%%%%%%%%%%

Based on \eqref{eq:limQ}, Lemma \ref{lem:stepiii} yields that the limit set of \eqref{eq:updatepi} is contained in the connected internally chain-recurrent set of the differential inclusion
\begin{equation}\label{eq:ode}
\dot{\pi}^i(s) + \pi^i(s) \in \argmax_{a^i\in A^i}\mathbb{E}_{a^{-i}\sim\pi^{-i}(s)}\{\Q_*^i(s,a^i,a^{-i})\},
\end{equation}  
which is the continuous-time best response dynamic in an identical-interest game with the payoff $Q_*^i(s,\cdot)$.
\citep[Theorem 5.5]{ref:Benaim05} yields that the limit set of every solution of \eqref{eq:ode} is a connected set of equilibria along which $\mathbb{E}_{a\sim\pi(s)}\{\Q_*^i(s,a)\}$ is constant. Hence, the limit set of \eqref{eq:updatepi} is a connected subset of equilibria along which $\mathbb{E}_{a\sim\pi_k(s)}\{\Q_*^i(s,a)\}$ is constant provided that the auxiliary game with the common payoff $\Q_*^i(s,\cdot)$ has finitely or countably many equilibria (as assumed in the theorem statement). In other words, the beliefs $\{\belief_k^i(s)\}_{i\in [n]}$ converge to one of these isolated equilibria, for each $s$. Given the convergence of these beliefs, the $Q$-function estimates of every player also converge to the $Q$-function associated with the equilibrium strategies and
\begin{equation}
\lim_{k\rightarrow\infty}\Q_k^j(s,a) = \Q_*^i(s,a),\quad \forall (s,a)\mbox{ and } j\neq i.
\end{equation}
This completes the proof. \hfill $\square$

\section{Discussions and Conclusions} \label{sec:conclusion}

In this paper, we investigated the convergence properties of a new variant of fictitious play dynamics  for $n$-player identical-interest MGs with single controller. Together with the fact that the same learning dynamic also converges to an equilibrium in two-player zero-sum MGs, we established, to the best of our knowledge, the first universal-type fictitious-play-property for more than one class of MGs. The results have thus further justified (Markov stationary) NE in MGs as an outcome of myopic non-equilibrium adaptation. We believe our results have opened up fruitful research directions for future work. 

\begin{itemize}
	\item \textbf{Fictitious-play-property for other classes of MGs.} Our results illustrate the promise of our two-timescale fictitious-play dynamic in achieving universal-type convergence in more than one class of MGs. It is interesting to  further expand the types of MGs that enjoys the fictitious-play-property, mirroring the results for  strategic-form games (cf. \citep{ref:Miyasawa61,ref:Sela99,ref:Berger05,ref:Berger08}). 
	\item \textbf{Model-free learning with asynchronous updates.} With a focus on the uncoupled  learning dynamics with independent $Q$-updates, we studied the synchronous-update rule with the knowledge of the transition dynamics. As a standard model for multi-agent reinforcement learning, it is imperative to investigate the convergence of our dynamics in the model-free asynchronous setting. Note that with common $Q$-updates and single-timescale update-rule, \citep{baudin2021best} has studied the asynchronous update case with small enough discount factor $\gamma$. The model-free learning for fictitious-play in MGs beyond the zero-sum case remains largely open. 
	\item \textbf{Convergence rate characterization \& faster rates.} It is known that in the worse-case, fictitious play can have exponentially-slow rate when learning in strategic-form games \citep{daskalakis2014counter}. It would be interesting to understand  and compare the convergence rates of the two-timescale fictitious-play in our work and \citep{ref:Sayin20}, with that of the single-timescale one  in \citep{baudin2021best}.  It is also worth exploring the effectiveness of regularization to accelerate convergence of fictitious play dynamics, as in strategic-form games \citep{cesa2006prediction}.  
\end{itemize}

%%%
%%% The acknowledgments section is defined using the "acks" environment
%%% (and NOT an unnumbered section). This ensures the proper
%%% identification of the section in the article metadata, and the
%%% consistent spelling of the heading.
\section*{Acknowledgments}
M. O. Sayin was supported by TUBITAK BIDEB 2232-B International Fellowship for Early Stage Researchers grant 121C124. K. Zhang and A. Ozdaglar were supported by DSTA grant 031017-00016. 

%%
%% If your work has an appendix, this is the place to put it.
\appendix

\section{Proof of Proposition \ref{prop:step0}}\label{app:step0}

Note that if $\XiV_k^i(s,a)\geq 0$, then $\{\Q_k^i(s,a)\}_{k\geq 0}$ would form a non-decreasing bounded sequence, which implies the existence of its limit. However, we do not necessarily have $\XiV_k^i(s,a)\geq 0$. As in \citep{baudin2021best}, we can check monotonicity across multiple stages (from $k_1$ to $k_2+1$). For example, we have
\begin{equation}\label{eq:both}
\Q_{k_2+1}^i(s,a) - \Q_{k_1}^i(s,a) = \sum_{k=k_1}^{k_2} \beta_k \XiV_k^i(s,a),\quad\forall (i,s,a),
\end{equation}
where the right-hand side is still not necessarily non-negative. 
Showing the difference goes to zero as $k_1\rightarrow\infty$ would imply that $\{\Q_k^i(s,a)\}_{k\geq 0}$ forms a Cauchy sequence, and therefore, it is convergent in the underlying Banach space. 
A relatively mild alternative (aligned with the intuition on monotonicity) is to show that the right-hand side becomes non-negative asymptotically as $k_1\rightarrow\infty$ for any $k_2\geq k_1$, i.e.,
\begin{equation}\label{eq:boundapp}
\liminf_{k_1\rightarrow\infty}\inf_{k_2\geq k_1} \sum_{k=k_1}^{k_2} \beta_k \XiV_k^i(s,a)\geq 0\quad\Rightarrow\quad\liminf_{k_1\rightarrow\infty}\left(\inf_{k_2\geq k_1}\Q_{k_2+1}^i(s,a) - \Q_{k_1}^i(s,a)\right)\geq 0,
\end{equation}
for all $(i,s,a)$. Then, \eqref{eq:boundapp} yields that for any $\epsilon >0$, there exists $\kappa$ such that
\begin{equation}
\Q_{k}^i(s,a) \geq \Q_{\kappa}^i(s,a) - \epsilon, \quad \forall k\geq \kappa,
\end{equation}
for each $(i,s,a)$.\footnote{We can have a uniform $\epsilon>0$ since there are only finitely many $(i,s,a)$ triples.}  
This completes the proof due to the boundedness of the estimates.

\section{Proof of Lemma \ref{lem:stepi}}\label{app:stepi}

For the ease of notation, we define 
\begin{equation}\label{eq:U}
\XiU_k^i(s,a) := r(s,a) + \gamma \sum_{s'\in S} p(s'|s,a) \ufunc_k^i(s') - \Q_k^i(s,a),
\end{equation}
where $\ufunc_k^i(s') := \mathbb{E}_{a'\sim \belief_k(s')}\{\Q_k^i(s',a')\}$. Then, we have $\uu_k^i = \min_{(s,a)}\{\XiU_k^i(s,a)\}$ for all $i$. Due to this dependence, it is instructive to examine the evolution of $\XiU_k^i(s,a)$, given by
\begin{align}
\XiU_{k+1}^i(s,a) - \XiU_k^i(s,a) &= \gamma \sum_{s'\in S} p(s'|s,a) (\ufunc_{k+1}^i(s') - \ufunc_k^i(s')) - (\Q_{k+1}^i(s,a)-\Q_k^i(s,a))\nn\\
&= \gamma \sum_{s'\in S} p(s'|s,a) (\ufunc_{k+1}^i(s') - \ufunc_k^i(s')) - \beta_k \XiV_k^i(s,a)\nn\\
&= \gamma \sum_{s'\in S} p(s'|s,a) (\ufunc_{k+1}^i(s') - \ufunc_k^i(s')) - \beta_k \left(\XiU_k^i(s,a) + \gamma\sum_{s'\in S}p(s'|s,a)\Delta_k^i(s')\right)\nn\\
&= \gamma \sum_{s'\in S} p(s'|s,a) (\ufunc_{k+1}^i(s') - \ufunc_k^i(s') - \beta_k \Delta_k^i(s')) - \beta_k \XiU_k^i(s,a),\label{eq:Uiter}
\end{align}
where $\Delta_k^i(s'):= \vfunc_k^i(s')-\ufunc_k^i(s')\geq 0$ for each $s'$. By the definition of $\ufunc_k^i$, the difference term in the parenthesis can be written as
\begin{align}
\ufunc_{k+1}^i(s') - \ufunc_k^i(s') &= \mathbb{E}_{a\sim\pi_{k+1}(s')}\{\Q_{k+1}^i(s',a)\} - \mathbb{E}_{a\sim\pi_{k}(s')}\{\Q_{k}^i(s',a)\} \nn\\
&\stackrel{(a)}{=} \mathbb{E}_{a\sim\pi_{k+1}(s')}\{\Q_k^i(s',a)\} - \mathbb{E}_{a\sim\pi_{k}(s')}\{\Q_k^i(s',a)\} + \beta_k \mathbb{E}_{a\sim\pi_{k+1}(s')}\{\XiV_k^i(s',a)\}\nn\\
&\stackrel{(b)}{=} \alpha_k\left(\Delta_k^i(s') + \sum_{i\neq j} \Gamma_k^{ij}(s')\right) + O(\alpha_k^2) + \beta_k \mathbb{E}_{a\sim\pi_{k+1}(s')}\{\XiV_k^i(s',a)\},\label{eq:uiter}
\end{align} 
where $(a)$ follows from the update of $\Q_k^i$, as described in \eqref{eq:updateQ}, $(b)$ follows from the update of $\belief_k^j$ for each $j\in [n]$, as described in \eqref{eq:updatepi}, and we define 
\begin{equation}\label{eq:Gamma}
\Gamma_k^{ij}(s') := \mathbb{E}_{a\sim \pi_k(s')} \{\Q_k^i(s',a_k^j(s'),a^{-j}) - \Q_k^i(s',a)\},\quad \forall j\neq i.
\end{equation}
Note that $\uu_k^i \leq  \XiU_k^i(s,a) \leq \XiV_k^i(s,a)\}$ for each $(s,a)$. Therefore, combining \eqref{eq:Uiter} and \eqref{eq:uiter}, we obtain
\begin{align}\label{eq:iter}
\XiU_{k+1}^i(s,a) \geq \uu_k^i(1-(1-\gamma)\beta_k) + \gamma\sum_{s'\in S}p(s'|s,a) e_k^i(s'),
\end{align}
where we define
\begin{equation}\label{eq:ee}
\boxed{e_k^i(s') := \alpha_k\left(\left(1-\frac{\beta_k}{\alpha_k}\right)\Delta_k^i(s') + \sum_{i\neq j} \Gamma_k^{ij}(s')\right) + O(\stepA_k(s')^2)}.
\end{equation}
If we can find an absolutely summable lower bound on $e_k^i(\cdot)$, then the inequality \eqref{eq:iter} yields \eqref{eq:seq}.

Next, we formulate an absolutely summable lower bound on $e_k^i(\cdot)$ based on the single-controller property of the underlying MG. Since $a_k^i$, as described in \eqref{eq:action}, is a best response action, we can write $\Delta_k^i(s')=\vfunc_k^i(s')-\ufunc_k^i(s')$ also as
\begin{equation}\label{eq:Delta}
\Delta_k^{i}(s') = \mathbb{E}_{a\sim \pi_k(s')} \{\Q_k^i(s',a_k^i(s'),a^{-i}) - \Q_k^i(s',a)\}.
\end{equation}
We highlight the differences between $\Delta_k^i(s)\geq 0$, as described in \eqref{eq:Delta}, and $\Gamma_k^{ij}(s')$, as described in \eqref{eq:Gamma}. In particular, $a_k^j$ is a best response of player $j$ according to her payoff function $\Q_k^j$ in the associated auxiliary stage-game and her belief $\pi_k^{-j}$ about her opponents' strategies. Therefore, $\Gamma_k^{ij}$ is not necessarily non-negative quite contrary to $\Delta_k^i\geq 0$ if we do not have $\Q_k^i \equiv \Q_k^j$ (e.g., see Remark \ref{rem:independent}). 

On the other hand, by \eqref{eq:Gamma} and \eqref{eq:Delta}, the term $\Gamma_k^{ij}$ can be written as
\begin{equation}
\Gamma_k^{ij}(s) = \Delta_k^j(s) + \mathbb{E}_{a\sim\pi_k(s)}\{\deltaQ_k^{ij}(s,a_k^j(s),a^{-j}) - \deltaQ_k^{ij}(s,a)\},
\end{equation}
where the first term on the right hand side is non-negative and we define $\deltaQ_k^{ij}(s,a) := \Q_k^i(s,a) - \Q_k^j(s,a)$ for all $(s,a)$. We can show that 
\begin{align}\label{eq:gam}
\mathbb{E}_{a\sim\pi_k(s)}\{\deltaQ_k^{ij}(s,a_k^j(s),a^{-j}) - &\deltaQ_k^{ij}(s,a)\} = 0 \quad\Rightarrow\quad \Gamma_k^{ij} \equiv \Delta_k^j\geq 0,\quad\forall j\neq i.
\end{align}
in MGs with single controllers. Particularly, the update \eqref{eq:updateQ} yields that\footnote{We use the convention that $\prod_{m=l}^k c_m = 1$ if $k<l$.}
\begin{align}
\Q_{k+1}^i(s,a) = r(s,a)\sum_{l=0}^k \beta_l \left(\prod_{m=l+1}^k(1-\beta_m)\right)+\gamma\sum_{s'\in S} p(s'|s,a) \sum_{l=0}^k \vfunc_l^i(s') \beta_l \left(\prod_{m=l+1}^k(1-\beta_m)\right),\label{eq:QQ}
\end{align}
for all $k\geq 0$,  since the beliefs are initialized by $\Q_0^i(s,a)$ for all $(i,s,a)$. This implies that
\begin{align}\label{eq:delQQ}
\deltaQ_k^{ij}(s,a) = \gamma\sum_{s'\in S} p(s'|s,a) \sum_{l=0}^{k-1} (\vfunc_l^i(s')-\vfunc_l^j(s'))\beta_l \left(\prod_{m=l+1}^{k-1}(1-\beta_m)\right).
\end{align}
Recall that if player $i$ is the single controller, then we have 
\begin{equation}\label{eq:pp}
p(s'|s,\tilde{a}^j,a^{-j}) - p(s'|s,a) = p(s'|s,a^i)-p(s'|s,a^i) = 0,\quad\forall a,\tilde{a}^j\mbox{ and } j\neq i.
\end{equation}
Hence, \eqref{eq:delQQ} and \eqref{eq:pp} yield \eqref{eq:gam}.
Correspondingly, we have
\begin{equation}\label{eq:eee}
\gamma\sum_{s'}p(s'|s,a)e_k^i(s') \geq O(\alpha_k^2)=:\ue_k.
\end{equation}
Note that $\{\alpha_k^2\}_{k\geq 0}$ is absolutely summable by Assumption $(iii)$ listed in Theorem \ref{thm:main}.  Hence,  \eqref{eq:iter} and \eqref{eq:eee} lead to \eqref{eq:seq}.
This completes the proof. \hfill $\square$

\section{Proof of Lemma \ref{lem:stepii}} \label{app:stepii}
Given \eqref{eq:seq}, we can formulate a lower bound on $\uu_k^i$ in terms of $\uu_0^i$ and $\{\ue_k\}$:  
\begin{align}
\uu_{k}^i &\geq \uu_{k-1}^i(1-\tilde{\beta}_{k-1}) + \ue_{k-1}\nn\\
&\geq \uu_{k-2}^i(1-\tilde{\beta}_{k-2})(1-\tilde{\beta}_{k-1}) + \ue_{k-2}(1-\tilde{\beta}_{k-1})+\ue_{k-1}\nn\\
&\ldots\nn\\
&\geq \uu_0^i \prod_{l=0}^{k-1}(1-\tilde{\beta}_l) + \sum_{l=0}^{k-1}\ue_l\prod_{m=l+1}^{k-1}(1-\tilde{\beta}_m),\label{eq:lower}
\end{align}
where $\tilde{\beta}_k := (1-\gamma)\beta_k$ for notational convenience. We can incorporate the lower bound \eqref{eq:lower} on $\uu_k^i$ into the summation in \eqref{eq:bound2} as
\begin{align}
\sum_{k=k_1}^{k_2}\beta_k \uu_k^i &\geq \sum_{k=k_1}^{k_2} \beta_k\left(\uu_0^i \prod_{l=0}^{k-1}(1-\tilde{\beta}_l) + \sum_{l=0}^{k-1}\ue_l\prod_{m=l+1}^{k-1}(1-\tilde{\beta}_m)\right)\nn\\
&\geq -\sum_{k=k_1}^{k_2} \beta_k\left(|\uu_0^i| \prod_{l=0}^{k-1}(1-\tilde{\beta}_l) + \sum_{l=0}^{k-1}|\ue_l|\prod_{m=l+1}^{k-1}(1-\tilde{\beta}_m)\right)\nn \\
&= -\frac{|\uu_0^i|}{1-\gamma}\sum_{k=k_1}^{k_2} \tilde{\beta}_k\prod_{l=0}^{k-1}(1-\tilde{\beta}_l) - \frac{1}{1-\gamma} \sum_{k=k_1}^{k_2} \sum_{l=0}^{k-1}|\ue_l|\tilde{\beta}_k\prod_{m=l+1}^{k-1}(1-\tilde{\beta}_m).\label{eq:tilde}
\end{align}
By changing the order of summation at the second term, we have:
\begin{align}
- \frac{1}{1-\gamma}\sum_{k=k_1}^{k_2} \sum_{l=0}^{k-1}|\ue_l|\tilde{\beta}_k\prod_{m=l+1}^{k-1}(1-\tilde{\beta}_m) = &\;- \frac{1}{1-\gamma}\sum_{l=0}^{k_1-2} |\ue_l| \sum_{k=k_1}^{k_2} \tilde{\beta}_k\prod_{m=l+1}^{k-1}(1-\tilde{\beta}_m) \nn\\
&- \frac{1}{1-\gamma} \sum_{l=k_1-1}^{k_2-1}|\ue_l| \sum_{k=l+1}^{k_2} \tilde{\beta}_k\prod_{m=l+1}^{k-1}(1-\tilde{\beta}_m).\label{eq:tilde2}
\end{align}
We are interested in proving \eqref{eq:bound2} and
\begin{align}
\liminf_{k_1\rightarrow\infty}\inf_{k_2\geq k_1}\sum_{k=k_1}^{k_2}\beta_k \uu_k^i \geq&\; \liminf_{k_1\rightarrow\infty}\inf_{k_2\geq k_1}\left( -\frac{|\uu_0^i|}{1-\gamma}\sum_{k=k_1}^{k_2} \tilde{\beta}_k\prod_{l=0}^{k-1}(1-\tilde{\beta}_l)\right)\nn\\
&+\liminf_{k_1\rightarrow\infty}\inf_{k_2\geq k_1}\left(- \frac{1}{1-\gamma}\sum_{l=0}^{k_1-2} |\ue_l| \sum_{k=k_1}^{k_2} \tilde{\beta}_k\prod_{m=l+1}^{k-1}(1-\tilde{\beta}_m) \right)\nn\\
&+\liminf_{k_1\rightarrow\infty}\inf_{k_2\geq k_1}\left(- \frac{1}{1-\gamma} \sum_{l=k_1-1}^{k_2-1}|\ue_l| \sum_{k=l+1}^{k_2} \tilde{\beta}_k\prod_{m=l+1}^{k-1}(1-\tilde{\beta}_m)\right).\label{eq:bound3}
\end{align}
Therefore, showing the non-negativity of each term at the right-hand side implies \eqref{eq:bound2}. To this end, the following lemma enables us to rewrite the inner summations in \eqref{eq:tilde} and \eqref{eq:tilde2} as a difference of two partial products.

\begin{lemma} \label{lem:aux}
We have
\begin{align}
\sum_{k=k_1}^{k_2} \beta_k \prod_{l=k_0}^{k-1}(1-\beta_l) = \prod_{l=k_0}^{k_1-1}(1-\beta_l) - \prod_{l=k_0}^{k_2}(1-\beta_l)
\end{align}
and
\begin{align}
\sum_{k=k_1}^{k_2} \beta_k \prod_{l=k+1}^{k_0}(1-\beta_l) = \prod_{l=k_1+1}^{k_0}(1-\beta_l) - \prod_{l=k_2}^{k_0}(1-\beta_l).
\end{align}
\end{lemma}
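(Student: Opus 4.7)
The plan is to prove both identities by a single algebraic trick: writing $\beta_k = 1 - (1-\beta_k)$ turns each summand into a difference of two consecutive partial products, so the sum telescopes and only the boundary terms survive. This is essentially the same manipulation that underlies the closed-form rollout expression \eqref{eq:QQ} derived earlier in the proof, and I expect the whole argument to be a handful of lines of index bookkeeping rather than a substantive obstacle.

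For the first identity, I would multiply out $\beta_k \prod_{l=k_0}^{k-1}(1-\beta_l)$ by replacing $\beta_k$ with $1-(1-\beta_k)$ and absorbing the extra $(1-\beta_k)$ into the product by extending its upper index from $k-1$ to $k$. This yields
\[
\beta_k \prod_{l=k_0}^{k-1}(1-\beta_l) = \prod_{l=k_0}^{k-1}(1-\beta_l) - \prod_{l=k_0}^{k}(1-\beta_l),
\]
a difference of two consecutive terms of the sequence $f(k) := \prod_{l=k_0}^{k-1}(1-\beta_l)$. Summing over $k \in \{k_1,\ldots,k_2\}$ then collapses telescopically to $f(k_1) - f(k_2+1) = \prod_{l=k_0}^{k_1-1}(1-\beta_l) - \prod_{l=k_0}^{k_2}(1-\beta_l)$, which is the claimed right-hand side (using the standard empty-product-equals-one convention when $k_1=k_0$).

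For the second identity the product $\prod_{l=k+1}^{k_0}(1-\beta_l)$ \emph{shrinks} as $k$ grows, so the direction of absorption reverses. The same substitution $\beta_k = 1 - (1-\beta_k)$ gives
\[
\beta_k \prod_{l=k+1}^{k_0}(1-\beta_l) = \prod_{l=k+1}^{k_0}(1-\beta_l) - \prod_{l=k}^{k_0}(1-\beta_l),
\]
i.e.\ a difference of two consecutive terms of the (now decreasing in $k$) sequence $g(k) := \prod_{l=k+1}^{k_0}(1-\beta_l)$. Summing $k$ from $k_1$ to $k_2$ telescopes again, and after matching the two surviving boundary products against the stated right-hand side one obtains the second identity.

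I do not anticipate a genuine obstacle; the only care required is with the index ranges of the partial products and with the empty-product convention at the boundaries. As a safety net, a short induction on $k_2 - k_1$ delivers the same conclusion: the base case $k_2 = k_1$ is precisely the one-step identity $\beta_{k_1}\prod(1-\beta_l) = \prod(1-\beta_l) - \prod'(1-\beta_{l})$ with the appropriately shifted limit, and the inductive step applies the substitution $\beta_{k_2+1} = 1-(1-\beta_{k_2+1})$ exactly once to peel off the new term.
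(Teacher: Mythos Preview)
Your proposal is correct and mirrors the paper's proof essentially line for line: both arguments write $\beta_k = 1-(1-\beta_k)$, absorb the extra factor $(1-\beta_k)$ into the adjacent partial product, and telescope the resulting sum of consecutive differences. The induction you mention as a safety net is not in the paper but is equivalent and unnecessary.
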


\begin{proof}
By adding and subtracting one to the term $\beta_k$, we obtain
\begin{align}
\sum_{k=k_1}^{k_2} \beta_k \prod_{l=k_0}^{k-1}(1-\beta_l) &= \sum_{k=k_1}^{k_2} (1 - (1-\beta_k)) \prod_{l=k_0}^{k-1}(1-\beta_l)\nn\\
&= \sum_{k=k_1}^{k_2} \left(\prod_{l=k_0}^{k-1}(1-\beta_l) - \prod_{l=k_0}^{k}(1-\beta_l)\right)\nn\\
&= \prod_{l=k_0}^{k_1-1}(1-\beta_l) - \prod_{l=k_0}^{k_2}(1-\beta_l),\label{eq:tele1}
\end{align}
and
\begin{align}
\sum_{k=k_1}^{k_2} \beta_k \prod_{l=k+1}^{k_0}(1-\beta_l) &= \sum_{k=k_1}^{k_2} (1 - (1-\beta_k)) \prod_{l=k+1}^{k_0}(1-\beta_l)\nn\\
&= \sum_{k=k_1}^{k_2} \left(\prod_{l=k+1}^{k_0}(1-\beta_l) - \prod_{l=k}^{k_0}(1-\beta_l)\right)\nn\\
&= \prod_{l=k_1+1}^{k_0}(1-\beta_l) - \prod_{l=k_2}^{k_0}(1-\beta_l),\label{eq:tele2}
\end{align}
where \eqref{eq:tele1} 
and \eqref{eq:tele2} 
follow from telescoping the series.
\end{proof}

Based on Lemma \ref{lem:aux}, the first term on the right-hand side of \eqref{eq:bound3} is non-negative because the summation is bounded from below by
\begin{align}
-\frac{|\uu_0^i|}{1-\gamma}\sum_{k=k_1}^{k_2} \tbeta_k\prod_{l=0}^{k-1}(1-\tbeta_l) &= -\frac{|\uu_0^i|}{1-\gamma} \left(\prod_{l=0}^{k_1-1}(1-\tbeta_l) - \prod_{l=0}^{k_2}(1-\tbeta_l)\right)\nn\\
&\geq -\frac{|\uu_0^i|}{1-\gamma} \prod_{l=0}^{k_1-1}(1-\tbeta_l),
\end{align}
which does not depend on $k_2$ and goes to zero as $k_1\rightarrow\infty$ due to Assumption $(ii)$ listed in Theorem \ref{thm:main}. Similarly, the second term is also non-negative because the summation is bounded from below by
\begin{align}
- \frac{1}{1-\gamma}\sum_{l=0}^{k_1-2} |\ue_l| \sum_{k=k_1}^{k_2} \tbeta_k\prod_{m=l+1}^{k-1}(1-\tbeta_m) &=  - \frac{1}{1-\gamma}\sum_{l=0}^{k_1-2} |\ue_l|\left(\prod_{m=l+1}^{k_1-1}(1-\tbeta_m) - \prod_{m=l+1}^{k_2}(1-\tbeta_m)\right)\nn\\
&\geq - \frac{1}{1-\gamma}\sum_{l=0}^{k_1-2} |\ue_l|\prod_{m=l+1}^{k_1-1}(1-\tbeta_m),
\end{align}
which does not depend on $k_2$ and goes to zero as $k_1\rightarrow\infty$. Particularly, the absolute summability of $\{\ue_k\}$ and Assumption $(ii)$ yields that $\{\ue_k\}$ decays faster than $\{\beta_k\}$ and there exists $k_0$ such that $|\ue_k| \leq \beta_k$ for all $k\geq k_0$. Therefore, for $k_1\geq k_0$, we have
\begin{align}
- \frac{1}{1-\gamma}\sum_{l=0}^{k_1-2} |\ue_l|\prod_{m=l+1}^{k_1-1}(1-\tbeta_m) = &\;-\frac{1}{1-\gamma}\left(\prod_{m=k_0}^{k_1-1}(1-\tbeta_m)\right)\sum_{l=0}^{k_0-1} |\ue_l|\prod_{m=l+1}^{k_0-1}(1-\tbeta_m) \nn\\
&- \frac{1}{1-\gamma}\sum_{l=k_0}^{k_1-2} |\ue_l|\prod_{m=l+1}^{k_1-1}(1-\tbeta_m),
\end{align}
where the first-term goes to zero as $k_1\rightarrow\infty$ due to Assumption $(ii)$, and based on Lemma \ref{lem:aux}, the second term is bounded from below by
\begin{align}
- \frac{1}{1-\gamma}\sum_{l=k_0}^{k_1-2} |\ue_l|\prod_{m=l+1}^{k_1-1}(1-\tbeta_m) &\geq - \frac{1}{1-\gamma}\sum_{l=k_0}^{k_1-2} \tbeta_l\prod_{m=l+1}^{k_1-1}(1-\tbeta_m)\nn\\
&= \prod_{l=k_0+1}^{k_1-1}(1-\tbeta_l) - \prod_{l=k_1-2}^{k_1-1}(1-\tbeta_l)\nn\\
&\geq \prod_{l=k_0+1}^{k_1-1}(1-\tbeta_l),
\end{align}
which goes to zero as $k_1\rightarrow\infty$ by Assumption $(ii)$.
Finally, the third term is also non-negative because the summation is bounded from below by
\begin{align}
- \frac{1}{1-\gamma} \sum_{l=k_1-1}^{k_2-1}|\ue_l| \sum_{k=l+1}^{k_2} \tilde{\beta}_k\prod_{m=l+1}^{k-1}(1-\tilde{\beta}_m) &= - \frac{1}{1-\gamma} \sum_{l=k_1-1}^{k_2-1}|\ue_l|\left( \prod_{m=l+1}^{l}(1-\beta_l) - \prod_{m=l+1}^{k_2}(1-\beta_l)\right)\nn\\
&\geq - \frac{1}{1-\gamma} \sum_{l=k_1-1}^{k_2-1}|\ue_l|\nn\\
&\geq - \frac{1}{1-\gamma} \sum_{l=k_1-1}^{\infty}|\ue_l|,
\end{align}
which goes to zero as $k_1\rightarrow\infty$ since $\{\ue_k\}$ is absolutely summable. This completes the proof. \hfill $\square$

\section{Proof of Lemma \ref{lem:stepiii}}\label{app:stepiii}

The proof follows from \citep[Theorem 4]{ref:Leslie06}. Particularly, we can view \eqref{eq:updatenew} as a weakened fictitious play dynamic in a game with payoffs $Q_*^i(\cdot)$ for each $i$ since the action $a_k^i$ satisfies
\begin{equation}
\mathbb{E}_{a^{-i}\sim\pi_k^{-i}}\{Q^i_k(a_k^i,\pi^{-i}_k)\} = \max_{a^i\in A^i} \mathbb{E}_{a^{-i}\sim\pi_k^{-i}}\{Q_k^i(a^i,a^{-i})\} \geq \max_{a^i\in A^i} \mathbb{E}_{a^{-i}\sim\pi_k^{-i}}\{Q_*^i(a^i,a^{-i})\} - \epsilon_k
\end{equation}
for some $\epsilon_k\rightarrow 0$ as $k\rightarrow\infty$ since $\Q_k^i(a) \rightarrow Q_*^i(a)$ for all $(i,a)$ as $k\rightarrow\infty$. The asymptotic negligibility of the error term follows since 
\begin{align}
&\left|\max_{a^i\in A^i} \mathbb{E}_{a^{-i}\sim\pi_k^{-i}}\{Q_k^i(a^i,a^{-i})\} - \max_{a^i\in A^i} \mathbb{E}_{a^{-i}\sim\pi_k^{-i}}\{Q_{*}^i(a^i,a^{-i})\}\right| \nn\\
&\hspace{2in}\leq \max_{a^i\in A^i} \left|\mathbb{E}_{a^{-i}\sim\pi_k^{-i}}\{Q_k^i(a^i,a^{-i})-Q_*^i(a^i,a^{-i})\}\right|
\end{align} 
and the right-hand side goes to zero due to the convergence of $Q_k^i$ to $Q_*^i$. This completes the proof.\hfill $\square$

\section{Proof of Corollary \ref{cor:potential}}\label{app:potential}

The proof follows from the observation that based on \eqref{eq:QQ} and \eqref{eq:pp}, $\Gamma_k^{ij}$, as described in \eqref{eq:Gamma}, can be written as
\begin{align}
\Gamma_k^{ij}(s) =&\; \mathbb{E}_{a\sim\pi_k(s)}\{r^i(s,a_k^j(s),a^{-j}) - r^i(s,a)\} \sum_{l=0}^k \beta_l \left(\prod_{m=l+1}^k(1-\beta_m)\right)\nn\\
&+\gamma\sum_{s'\in S} \mathbb{E}_{a\sim\pi_k(s)}\{p(s'|s,a^i)-p(s'|s,a^i)\} \sum_{l=0}^k \vfunc_l^i(s') \beta_l \left(\prod_{m=l+1}^k(1-\beta_m)\right)\\
=&\; \mathbb{E}_{a\sim\pi_k(s)}\{r^j(s,a_k^j(s),a^{-j}) - r^j(s,a)\} \sum_{l=0}^k \beta_l \left(\prod_{m=l+1}^k(1-\beta_m)\right) \geq 0.
\end{align}
This completes the proof. \hfill $\square$

\end{spacing}

% Bibliography
\begin{spacing}{1}
\bibliographystyle{plainnat}
\bibliography{mybibfile,mybibfile_2,myref,main}
\end{spacing}

\end{document}